\newtheorem{theorem}{Theorem}[section]
\newtheorem{lemma}[theorem]{Proposition}
\newcommand{\ra}[1]{\renewcommand{\arraystretch}{#1}}
\newcommand{\blue}[1][\textcolor{black}]{#1}
\begin{document}
	
\begin{frontmatter}
%
\title{Robust low-rank covariance matrix estimation with a general pattern of missing values}

\author[1,*]{A. Hippert-Ferrer}
\author[2]{M. N. El Korso}
\author[2]{A. Breloy}
\author[3]{G. Ginolhac}
\address[1]{Université Paris-Saclay, CNRS, CentraleSupélec, Laboratoire des signaux et systèmes, 91190, Gif-sur-Yvette, France.}
\address[2]{LEME, Paris-Nanterre University, 50 rue de S\`evres, 92410 Ville d'Avray, France}
\address[3]{LISTIC, Savoie Mont Blanc University, 5 chemin de Bellevue, Annecy, France}
\address[*]{Corresponding author: alexandre.hippert-ferrer@centralesupelec.fr}

\title{Robust low-rank covariance matrix estimation with a general pattern of missing values}

\begin{abstract}
This paper tackles the problem of robust covariance matrix estimation when the data is incomplete. Classical statistical estimation methodologies are usually built upon the Gaussian assumption, whereas existing robust estimation ones assume unstructured signal models. The former can be inaccurate in real-world data sets in which heterogeneity causes heavy-tail distributions, while the latter does not profit from the usual low-rank structure of the signal. Taking advantage of both worlds, a covariance matrix estimation procedure is designed on a robust (\blue{mixture of scaled} Gaussian) low-rank model by leveraging the observed-data likelihood function within an expectation-maximization algorithm. It is also designed to handle general pattern of missing values. The proposed procedure is first validated on simulated data sets. Then, its interest for classification and clustering applications is assessed on two real data sets with missing values, which include multispectral and hyperspectral time series.
\end{abstract}

\begin{keyword}
Missing data, covariance matrix, \blue{mixture of scaled} Gaussian, low-rank, expectation-maximization algorithm, classification.
\end{keyword}

\end{frontmatter}

%

\section{Introduction}
\label{sec1}

Missing data appear when no value of the data is available for a given variable and a given observation. This classical problem \cite{anderson1957,afifi1966} is a pitfall in statistical signal processing and its related fields, as statistical inference \cite{little1987, schafer1997} and data analysis \cite{benzecri1973,vanbuuren2018}. To name a few applications where missing data has drawn significant attention, we can cite  biomedical studies, chemometrics \cite{walczak1} or remote sensing where missing values created by poor atmospheric conditions or sensor failure can dramatically hamper the understanding of the physical phenomenon under observation \cite{shen2015}. Covariance matrix (CM) estimation theory, which is a fundamental issue in signal processing and machine learning problems, has witnessed particular efforts focusing in the case of incomplete data.

In this scope, it is known that any efficient estimation algorithm should be able to exploit the source of information offered by missing values, which is formerly called \textit{informative missingness} \cite{rubin1976}. One approach to estimate the CM with missing values is to rely on maximum likehood (ML) estimation with a prior assumption on the probability distribution of the data. Within this framework, the Expectation-Maximization (EM) algorithm \cite{Dempster1977} is a handy iterative procedure to obtain ML estimates as it is based on the expectation of the conditional probability $p(\bm{z}|\bm{x},\bm{\theta})$ of the latent (missing) variables $\bm{z}$ given the observed variables $\bm{x}$ and the parameters $\bm{\theta}$ under estimation. Extensive work has been put into CM estimation with missing values using the EM algorithm by assuming independent and identically distributed (iid) samples drawn from the Gaussian distribution \blue{when the sample size $n$ exceeds the dimension $p$ ($n>p$)} \cite{jamshidian1997, liu1999} or in high-dimensional regime \cite{schneider2001,lounici2014,stadler2014} ($p>n$). Other models have been considered in which the covariance is assumed to have a low-rank plus identity structure \cite{lounici2014}, which will be referred in the following to as LR structure\footnote{It is sometimes named spiked or factor model.}. This structure is closely related to signal subspace inference or principal component analysis (PCA) with missing values \cite{chen2009, josse2012}.

To overcome robutsness issues associated with the classical Gaussian assumption, more general distributions have been considered, such as the multivariate $t$-distribution \cite{little1988, liu1995} and its regularized version for small sample size using an improved EM algorithm \cite{liupalomar2019}. These distributions are encompassed by elliptical symmetric (ES) distributions, which are directly linked via ML estimation to the so-called M-estimators in the complete data case \cite{tyler1987, conte2002, pascal2008}. The robustness of these tools to non-Gaussianity \cite{zoubir2018} has been illustrated in a wide range of applications including radar processing, hyperspectral imagery and classification \cite{gini2000, greco2007,theiler2005,formont2010,ollier2018,abdallah2020}. \blue{Interestingly, this family can be accounted for in a robust (distribution-free) manner by considering the so-called mixture of scaled Gaussian (\blue{MSG}) \cite{ollila2012} , which models the data as Gaussian conditionally to an unknown deterministic scale for each sample.}

This paper proposes to take advantage of both robust estimation and LR structure models by proposing a EM-based procedure to estimate the CM in the case of incomplete data drawn from a \blue{MSG} distribution. As cited above, existing robust covariance estimation algorithms based on \blue{MSG} distributions are not designed to deal with missing values, and, to the best of our knowledge, existing works which use the LR structure are only based on Gaussian assumptions. In the full rank case, the work of \cite{frahm2010} has extended generalized elliptical distributions (GES), from which \blue{MSG} distributions are a sub-class, to incomplete data by proposing an adapted form of the Tyler's M-estimator.

Furthermore, existing estimation algorithms that handle missing data assume that the missingness pattern, \emph{i.e.}, the pattern describing which values are missing with respect to the observed data, is a monotone pattern \cite{little1988,liu1999,liupalomar2019,frahm2010} (see the illustration of missing data patterns in Fig.~\ref{fig1}). This pattern can be of interest, \emph{e.g.}, in longitudinal studies \cite{little2019} or sensor failure \cite{larsson2001}. However, in other applications such as remote sensing, missing values can take very diverse patterns because of unpredictable events (clouds, snowfall, etc.) \cite{shen2015}, which leads to the so-called general pattern \cite{little1987}.


\begin{figure}
	\centering
	\includegraphics[scale=1.1, trim={4.1cm 23.2cm 8cm 5.1cm}, clip]{./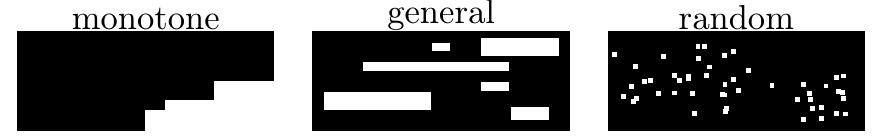}
	\caption{Illustration of three rectangular data sets with different missing data patterns (black=observed, white=missing): monotone, general and random. Note that the random pattern is a special case of the general pattern, where missing values appear on individual observations rather than blocks. See \cite{little2002} (p. 5) for a precise description of missing data patterns.}
	\label{fig1}
\end{figure}

With the previous points in mind concerning \textit{i)} robustness to non-Gaussianity, \textit{ii)} low-rank structured models and \textit{iii)} missingness patterns, the contributions of this paper are summarized below:

\begin{itemize}
	\item[1)] A generic algorithm for CM estimation is developed for structured signals with a non-Gaussian distribution. A procedure for non-structured / Gaussian is also obtained as a special case. In the structured configuration, the covariance matrix is supposed to have a LR structure as in \cite{goodman2007}, which will be detailed in the next section;
	\item[2)] The analysis of missingness patterns is extended to the general pattern, which fills a gap in robust estimation; 
	\item[3)] The proposed estimators are tested on both simulated and real data sets with two applications in machine learning, namely supervised classification and unsupervised clustering with missing values, which has rarely been unfolded (outside of simulated data) in the literature of CM estimation with missing values.
\end{itemize}

The rest of the paper is organized as follows: Section~\ref{sec2} formulates the problem by framing the \blue{MSG} distribution, the LR CM structure and the missing data model. Section~\ref{sec3} describes the proposed robust CM estimation procedures for unstructured and structured models. Section~\ref{sec4} illustrates the performance of the proposed method on simulated data sets in terms of CM estimation and data imputation. Finally, Section~\ref{sec5} shows the interest of the proposed procedure in real data applications using covariance-based machine learning methods.

\noindent \textit{Notations}. $a$ indicates a scalar quantity, $\bm{a}$ represents a vector quantity and $\bm{A}$ a matrix. $\{a_i\}_{i=1}^n$ denotes the set of elements $a_i$ with $i\in[1,n]$. The transpose operator is $\top$, whereas $\text{tr}\{\cdot\}$ and $|\cdot|$ are respectively the trace and the determinant operators. The eigenvalue decomposition \blue{of $\bm{A}$ into matrices $\bm{U}$ and $\bm{\Lambda}$} is denoted by \blue{$ \bm{A} \overset{\text{EVD}}{=} \bm{U}\bm{\Lambda}\bm{U}^\top$}. \blue{The symbol} $\succeq$ indicates positive definitiveness, $\mathcal{S}^p_{++}$ is the set of $p\times p$ symmetric positive definite (SPD) matrices\footnote{$\mathcal{S}_{++}^p = \{\bm{\Sigma}\in \mathcal{S}_p: \forall \bm{x}\in \mathbb{R}^p \backslash \{0\}, \bm{x}^{\top}\bm{\Sigma}\bm{x}>0 \}$.} and $\text{St}_{p,r}$ is the real Stiefel manifold of $p\times r$ orthogonal matrices\footnote{$\text{St}_{p,r} = \{\bm{U} \in \mathbb{R}^{p\times r} : \bm{U}^\top\bm{U}=\bm{I}_r\}$.}. $\propto$ stands for ``proportional to". $Q(\cdot)$ is the score function. Finally, $\mathbb{E}[\cdot]$ denotes the expectation operator.

\section{Problem formulation}
\label{sec2}

\subsection{\blue{Mixture of scaled Gaussian} distributions}

Most of covariance matrix estimation procedures with incomplete data use the Gaussian assumption. However, this assumption can be inaccurate in remote sensing applications, where images often include heterogeneous areas. The \blue{mixture} of scaled Gaussian (\blue{MSG}) distributions can tackle this issue by managing heavier tails, which offers a better fit to empirical data \cite{ollila2012}. A real $p$-vector $\bm{y} \in \mathbb{R}^p$ follows a zero-mean multivariate \blue{MSG} distribution\footnote{\blue{Note that this distribution can be easily transposed to complex-valued vectors \cite{ollila2012}.}} if it admits the stochastic representation

\begin{equation}
	\label{eq:stochastic_y}
	\bm{y} = \sqrt{\tau} \bm{n}
\end{equation}
with $\bm{n} \sim \mathcal{N}(\bm{0},\bm{\Sigma})$ and for some scalar $\tau \in \mathbb{R}_{>0}$, called the \textit{texture}, \blue{which is strictly positive, deterministic and unknown for each sample.} 

Let us now define a rectangular data set $\bm{Y} \in \mathbb{R}^{p\times n}$ represented by $\bm{Y} = \{\bm{y}_i = (y_{1,i}, y_{2,i}, \dots, y_{p,i})^{\top} \}$ where $\{ \bm{y}_i \}_{i=1}^n$ are modeled as $n$ iid vectors of dimension $p$ drawn from a \blue{MSG} distribution. This leads to the following model: 

\begin{equation}
\label{eq:distribution}
\bm{y}_i|\tau_i \sim \mathcal{N}(\bm{0},\tau_i\bm{\Sigma}), \; \; \bm{\Sigma} \subseteq \mathcal{S}^p_{++}, \; \; \tau_i > 0
\end{equation}
The \blue{log}likelihood function of model~(\ref{eq:distribution}) is given by

\begin{equation}
	\label{eq:loglikelihood}
	\blue{\log \ell(\{\bm{y}_i\}|\bm{\Sigma},\{\tau_i\}) } \propto -n\log|\bm{\Sigma}| - p\sum_{i=1}^{n}\log \tau_i - \sum_{i=1}^{n}\bm{y}_i^{\top}(\tau_i\bm{\Sigma})^{-1}\bm{y}_i
\end{equation}
In this model, the texture can be seen as a \textit{scale} setting of the Gaussian model \cite{wiesel2012}. \blue{Such model is also intertwined with the class of compound Gaussian (CG) distributions (which is a subclass of ES distributions), which assumes the texture independent from $\bm{n}$ in (\ref{eq:stochastic_y}) and with a given PDF $f_\tau(.)$}. Note that the type of distribution within the class of CG distributions is mainly guided by the assumption on the PDF function \cite{ollila2012}.

\blue{Considering deterministic $\{\tau_i\}$ instead of assuming a PDF $f_\tau$ has proven its convenience in terms of robustness\footnote{This model follows a parametrization of the covariance matrix of the real elliptical model $\mathcal{E}(\bm{0},\tau \bm{\xi})$, where $\bm{\xi}$ is the shape matrix and where $\tau$ has only one value. In our case, the scale parameter varies at each observation $i=1,\dots,n$. More details can be found in \cite{yao1973,zoubir2018,maronna2019}.} \cite{tyler1987, pascal2008, ollila2012}}. This distribution is also more robust than the purely Gaussian one because the scales allow flexibility in the presence of heterogeneous data, \emph{e.g.}, noisy data, possible outliers or inconsistencies in the data (see, \textit{e.g.}, \cite{mian2018} for Synthetic Aperture Radar data).

\textbf{Remark 1.} As it is clear that $\tau_i=1, \; \forall i$ in the deterministic \blue{MSG} distribution~(\ref{eq:distribution}) gives the Gaussian distribution, results regarding CM estimation will be given by considering the Gaussian distribution as a special case of the \blue{MSG} distribution.

Here we assume that $\bm{\Sigma}$ is characterized by a LR structure which can be modeled by the well-known factor model \cite{ruppert2011} (also known as spiked model \cite{johnstone2001}):
\begin{equation}
\begin{cases}
\bm{\Sigma} = \sigma^2\bm{I}_p + \bm{H} \\
\bm{H} \succeq \bm{0} \\
\text{rank}(\bm{H})=r
\label{eq:lowrank}
\end{cases}
\end{equation}
where $\bm{I}_p$ denotes the $p$-dimensional identity matrix and $\bm{H}$ is a $p\times p$ low-rank signal covariance matrix of rank $r$. As in many works, the latter is considered to be known from prior physical assumptions \cite{goodman2007} or pre-estimated as in model order selection techniques \cite{stoica2004}. Model~(\ref{eq:lowrank}) is directly related to principal component analysis (PCA) and subspace recovery \cite{tipping1999}. 

\subsection{Data model}

As each vector $\bm{y}_i$ might have missing elements, it is necessary to design a model that takes into account incompleteness. Thereby, each of its observed and missing elements can be grouped in vectors denoted by $\bm y_i^o$ and $\bm y_i^m$ respectively, which are stacked in vectors $\widetilde{\bm y}_i$ such that 

\begin{equation}
\label{eq:model_data}
\widetilde{\bm y}_i = \bm P_i \bm y_i = \begin{pmatrix}
\bm{y}_{i}^o \\
\bm{y}_{i}^m
\end{pmatrix}, \quad i=1,\dots,n
\end{equation} 
where $\bm{P}_i \in \mathbb{R}^{p\times p}$ denotes a permutation matrix\footnote{Note that $\bm{P}_i$ is invertible and $\bm{P}_i^{-1} = \bm{P}_i^{\top}$.}. As illustrated by Fig.~\ref{fig2}, this set of operations consists in permuting the missing elements at the bottom of each vector $\bm{y}_i$. Then, the covariance matrix of $\widetilde{\bm y}_i$ becomes

\begin{equation}
	\widetilde{\bm{\Sigma}}_i=\begin{pmatrix}
	\widetilde{\bm{\Sigma}}_{i,oo} & \widetilde{\bm{\Sigma}}_{i,mo} \\
	\widetilde{\bm{\Sigma}}_{i,om} & \widetilde{\bm{\Sigma}}_{i,mm}
	\end{pmatrix}=\bm{P}_i\bm{\Sigma}\bm{P}_i^{\top}
\end{equation}
where $\widetilde{\bm{\Sigma}}_{i,mm}$, $\widetilde{\bm{\Sigma}}_{i,mo}$, $\widetilde{\bm{\Sigma}}_{i,oo}$ are the block CM of $\bm{y}_i^m$, of $\bm{y}_i^m$ and $\bm{y}_i^o$, and of $\bm{y}_i^o$.

In a similar manner, the recent work of \cite{aubry2021} \blue{models (\ref{eq:model_data}) by} $\widetilde{\bm{y}}_i = \bm{A}_i\bm{y}_i$, where $\bm{A}_i$ is a $n_i\times p$ selection matrix \blue{constructed from the extraction of the $n_i \le n$ rows of $\bm{I}_p$ corresponding to the available observations at the $i$-th snapshot}. This is only an alternative view of the problem. However, \blue{the derived} results \blue{of \cite{aubry2021}} hold for the Gaussian distribution. In our case, it is worth mentioning that since each $\bm{y}_i$ follows a \blue{MSG} distribution, their permuted version $\widetilde{\bm{y}}_i$ also follow a \blue{MSG} distribution because only elements within each $p$-vector are permuted.

\tikzset{every picture/.style={line width=0.5pt}} 
\begin{center}
	\begin{figure}
	\centering
	\includegraphics[trim={2cm 23.3cm 11cm 1.7cm}, clip]{./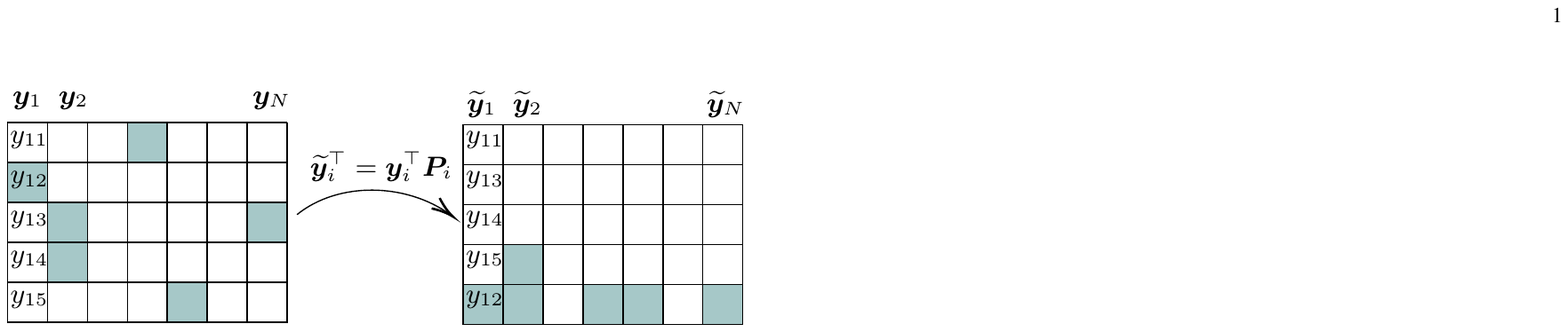}
	\caption{Missing elements (colored) of each $\bm{y}_i$ are permuted to fall at the bottom of each $\widetilde{\bm{y}}_i$. \blue{Notice that} a completely observed $\bm{y}_i$ implies $\bm{P}_i = \bm{I}_p$.}
	\label{fig2}
	\end{figure}
\end{center}

\subsection{The EM algorithm: a brief reminder}

The EM algorithm is a widely employed iterative scheme for ML estimation in incomplete data problems \cite{Dempster1977}. This algorithm offers a rigorous and formal approach to the intuitive \textit{ad hoc} idea of filling in missing values\footnote{Indeed, it is not the missing values themselves that are filled, but the function of the missing values (the sufficient statistics) that are computed \cite{little2002}.}: at the E-step, the following conditional expectation of the complete data loglikelihood is found given the observed data and the current \blue{deterministic} estimated parameters $\bm{\theta}$:

\begin{equation}
\mathcal{L}_c(\blue{\widetilde{\bm{Y}}}|\bm{\theta}) = \log \blue{\ell(\{\bm{y}_i\}}|\bm{\Sigma},\blue{\{\tau_i\}}) 
\end{equation}
where $\bm{\theta}$ is the vector of unknown parameters \blue{(which are $\tau_i$ and $\bm{\Sigma}$ in our case) and $\widetilde{\bm{Y}} = [\widetilde{\bm{y}}_1,\dots,\widetilde{\bm{y}}_n]$. Note that $\mathcal{L}_c(\widetilde{\bm{Y}}|\bm{\theta})$ is equivalent to $\mathcal{L}_c(\bm{Y}|\bm{\theta})$ since both $\widetilde{\bm{Y}}$ and $\bm{Y}$ contain the same missing values, ordered and unordered, respectively.} \blue{$\widetilde{\bm{Y}}$ (or $\bm{Y}$)} are the so-called \textit{complete data} which are the combination of observed and missing data. At the M-step, the parameters are updated by maximizing the expected complete data loglikelihood. To summarize, if $\bm{\theta}^{(t)}$ is the current estimate of the parameter $\bm{\theta}$ and $f(\cdot)$ the density function, the E-step computes 

\begin{equation}
	\label{eq:e-step}
	Q(\bm{\theta}|\bm{\theta}^{(t)}) = \int \mathcal{L}_c(\widetilde{\bm{Y}}|\bm{\theta})f(\bm{y}^m|\bm{y}^o,\bm{\theta}=\bm{\theta}^{(t)})d\bm{y}^m
\end{equation}
and the M-step find $\bm{\theta}^{(t+1)}$ by maximizing (\ref{eq:e-step}):

\begin{equation}
	Q(\bm{\theta}^{(t+1)}|\bm{\theta}^{(t)}) \ge Q(\bm{\theta}|\bm{\theta}^{(t)})
\end{equation}
The goal is then to repeat E and M-steps until a stopping criteria, such as the distance $||\bm{\theta}^{(m+1)}-\bm{\theta}^{(m)}||$, converges to a pre-defined threshold.

\section{Covariance estimation under non-Gaussian environment}
\label{sec3}

In this section, covariance matrix estimation procedures are developed in the presence of missing data under the \blue{MSG} distribution. The rank constraint given by (\ref{eq:lowrank}) is also resolved. For both configurations, the EM algorithm is adopted. In the following, we consider the determinant-based normalization \cite{paindaveine2008} which leads to the estimation of the normalized CM $\bm{\Sigma}/|\bm{\Sigma}|^{\frac{1}{p}}$, called the shape matrix \cite{zhang2017}.

\subsection{Robust full-rank estimation}

In this case, the unknown parameters to estimate are $\bm{\theta}=[\bm{\zeta}^{\top},\{\tau_i\}_{i=1}^n]^{\top}$, where $\bm{\zeta}$ contains the elements of the lower triangular matrix of $\bm{\Sigma}$ including its diagonal \blue{and where $\{\tau_i\}$ are the texture parameters}. As we shall see hereafter, the differences with the purely Gaussian case ($\tau_i=1$) are that 1) the unknown scales $\tau_i$ must be taken into account in the formulation and computation of the expectation at the E-step and 2) a closed-form expression $\widehat{\tau}_i$ must be derived to update $\widehat{\bm{\Sigma}}$ at the M-step. \blue{These estimators can be found by maximization the loglikelihood of the incomplete data, which is formulated onward.} Replacing $\bm{y}_i$ by its permuted version $\widetilde{\bm{y}}_i$ in~(\ref{eq:loglikelihood}), the complete data loglikelihood of the \blue{MSG} distribution is alternatively given by:

\begin{equation}
\label{eq:ll_CD_permuted}	
\mathcal{L}_{\text{c}}(\blue{\widetilde{\bm{Y}}}|\bm{\theta}) \propto - n\log|\bm{\Sigma}| - p\sum_{i=1}^{n}\log \tau_i-\sum_{i=1}^{n}\widetilde{\bm{y}}_{i}^{\top}\big(\tau_i\widetilde{\bm{\Sigma}}_i\big)^{-1}
\widetilde{\bm{y}}_{i}
\end{equation}
At the E-step of the algorithm, the expectation of the complete loglikelihood~(\ref{eq:ll_CD_permuted}) is computed by using the so-called $Q$-function. This function is the expectation of the missing data conditioned by the estimation of the parameters at the $t$-th iteration of the algorithm, that is:

\begin{equation}
\label{eq:Q_function}
Q_i(\bm{\theta}|\bm{\theta}^{(t)}) = \mathbb{E}_{\bm{y}_i^m|\bm{y}_i^o,\bm{\theta}^{(t)}} \big[\mathcal{L}_{\text{c}}(\bm{y}_i^o,\bm{y}_i^m|\bm{\theta})\big]
\end{equation}
Due to the iid of the observations, one obtains:

\begin{equation}
Q(\bm{\theta}|\bm{\theta}^{(t)}) = \sum_{i=1}^{n} Q_i(\bm{\theta}|\bm{\theta}^{(t)})
\end{equation}
where

\begin{align}
\label{eq:Qi}
Q_i(\bm{\theta}|\bm{\theta}^{(t)}) &= \mathbb{E}_{\bm{y}_i^m|\bm{y}_i^o,\bm{\theta}^{(t)}}\Big[
- n\log|\bm{\Sigma}| - p\log \tau_i-\widetilde{\bm{y}}_{i}^{\top}\big(\tau_i\widetilde{\bm{\Sigma}}_i\big)^{-1}
\widetilde{\bm{y}}_{i}\Big] \nonumber \\
&= \mathbb{E}_{\bm{y}_i^m|\bm{y}_i^o,\bm{\theta}^{(t)}}\Bigg[- n\log|\bm{\Sigma}| - p\log \tau_i-\begin{pmatrix}
\bm{y}_i^o \\
\bm{y}_i^m
\end{pmatrix}^{\top} \big(\tau_i\widetilde{\bm{\Sigma}}_i\big)^{-1} \begin{pmatrix}
\bm{y}_i^o \\
\bm{y}_i^m
\end{pmatrix}\Bigg]
\end{align}
The computation of (\ref{eq:Qi}) is hastened as both first and second terms \blue{in the expectation} are deterministic, which means that only the expectation of the last term has to be computed. To calculate \blue{this expectation}, the trace tr$\{\cdot\}$ is used:

\begin{align}
\label{eq:tr}
&\mathbb{E}_{\bm{y}_i^m|\bm{y}_i^o,\bm{\theta}^{(t)}}\Bigg[\begin{pmatrix}
\bm{y}_i^o \nonumber \\
\bm{y}_i^m
\end{pmatrix}^{\top} \big(\tau_i\widetilde{\bm{\Sigma}}_i\big)^{-1} \begin{pmatrix}
\bm{y}_i^o \\
\bm{y}_i^m
\end{pmatrix}\Bigg] \\
&= \mathbb{E}_{\bm{y}_i^m|\bm{y}_i^o,\bm{\theta}^{(t)}} \Bigg[\text{tr}
\bigg\{  \begin{pmatrix}
\bm{y}_{i}^o \\ \nonumber
\bm{y}_{i}^m
\end{pmatrix}
\begin{pmatrix}\bm{y}_{i}^{o\top} & \bm{y}_{i}^{m\top}\end{pmatrix}\big(\tau_i\widetilde{\bm{\Sigma}}_i\big)^{-1} \bigg\} \Bigg] \\
&= \tau_i^{-1} \text{tr}\Bigg\{\mathbb{E}_{\bm{y}_i^m|\bm{y}_i^o,\bm{\theta}^{(t)}} \Bigg[\begin{pmatrix}
\bm{y}_{i}^o\bm{y}_{i}^{o\top} & \bm{y}_{i}^o\bm{y}_{i}^{m\top} \\ \nonumber
\bm{y}_{i}^m\bm{y}_{i}^{o\top} & \bm{y}_{i}^m\bm{y}_{i}^{m\top}
\end{pmatrix}
\Bigg]
\widetilde{\bm{\Sigma}}_i^{-1}\Bigg\} \\
&= \tau_i^{-1} \text{tr} \big\{ \bm{B}_i^{(t)} \widetilde{\bm{\Sigma}}_i^{-1} \big\} 
\end{align}
where $\bm{B}_i^{(t)} = \begin{pmatrix}
\bm{D}_i^{(t)} & \bm{E}_i^{(t)} \\
\bm{F}_i^{(t)} & \bm{G}_i^{(t)}
\end{pmatrix}$ is a $p\times p$ matrix at iteration $t$ of the EM algorithm, with blocks given by

\begin{align}
\label{eq:sufficient_stats}
\bm{D}_i^{(t)} &= \mathbb{E}_{\bm{y}_i^m|\bm{y}_i^o,\bm{\theta}^{(t)}}\big[\bm{y}_{i}^\text{o}\bm{y}_{i}^{o\top}\big] = \bm{y}_{i}^o\bm{y}_{i}^{o\top} \\
\bm{E}_i^{(t)} &=  \blue{\mathbb{E}_{\bm{y}_i^m|\bm{y}_i^o,\bm{\theta}^{(t)}}\big[\bm{y}_{i}^{o}\bm{y}_{i}^{m\top}\big] = \bm{y}_{i}^{o} \mathbb{E}_{\bm{y}_i^m|\bm{y}_i^o,\bm{\theta}^{(t)}}\big[\bm{y}_{i}^{m\top}\big]} \\ 
\bm{F}_i^{(t)} &= \bm{E}_i^{(t)\top}\\ 
\bm{G}_i^{(t)} &= \mathbb{E}_{\bm{y}_i^m|\bm{y}_i^o,\bm{\theta}^{(t)}}\big[\bm{y}_{i}^m\bm{y}_{i}^{m\top}\big]
\end{align}
Thus, the expectation to compute \blue{are $\mathbb{E}_{\bm{y}_i^m|\bm{y}_i^o,\bm{\theta}^{(t)}}[\bm{y}^{m}_i]$ and}  $\mathbb{E}_{\bm{y}_i^m|\bm{y}_i^o,\bm{\theta}^{(t)}}[\bm{y}^m_i\bm{y}^{m^{\top}}_i]$, which are the expectations of the missing data conditioned by the observed data. \blue{Note that $\bm{y}^{m}_i$ and $\bm{y}^m_i\bm{y}^{m^{\top}}_i$ are the sufficient statistics for the complete data $\widetilde{\bm{Y}}$, which have a distribution from the regular exponential family \cite{little2002}}. Here, using a classical result on conditional distributions (see Theorem 2.5.1 in \cite{anderson1965}, p. 35), \blue{the conditional distribution of $\bm{y}_i^m$ given $\bm{y}^o_i$ is}

\begin{equation}
	\blue{\bm{y}_i^m | \bm{y}^o_i \sim \mathcal{N}(\widetilde{\bm{\mu}}_{i,m|o}, \widetilde{\bm{\Sigma}}_{i,mm|o})}
\end{equation}
where $ \widetilde{\bm{\mu}}_{i,m|o} = \widetilde{\bm{\Sigma}}_{i,mo}\widetilde{\bm{\Sigma}}_{i,oo}^{-1}\widetilde{\bm{y}}_i^o$ and $\widetilde{\bm{\Sigma}}_{i,mm|o} = \tau_i(\widetilde{\bm{\Sigma}}_{i,mm} - \widetilde{\bm{\Sigma}}_{i,mo}\widetilde{\bm{\Sigma}}_{i,oo}^{-1}\widetilde{\bm{\Sigma}}_{i,om})$ are the conditional mean and covariance matrix, respectively. \blue{Note that $\tau_i$ is absent from $\widetilde{\bm{\mu}}_{i,m|o}^{(t)}$ as it is annihilated by its inverse.} Following \cite{little2002}, at the $t$-th iteration, the E-step of the EM algorithm consists in calculating
\begin{align}
\label{eq:cond_exp_gaussian}
\blue{\mathbb{E}_{\bm{y}_i^m|\bm{y}_i^o,\bm{\theta}^{(t)}}\big[\bm{y}^m_i\big]} &\blue{= \widetilde{\bm{\mu}}_{i,m|o}^{(t)}} \\ 
\mathbb{E}_{\bm{y}_i^m|\bm{y}_i^o,\bm{\theta}^{(t)}}\big[\bm{y}^m_i\bm{y}^{m^{\top}}_i\big] &= \widetilde{\bm{\Sigma}}_{i,mm|o}^{(t)} + \widetilde{\bm{\mu}}_{i,m|o}^{(t)}\widetilde{\bm{\mu}}_{i,m|o}^{\top(t)}
\end{align}
Finally, $\bm{\theta}^{(t+1)}$ is obtained at the M-step of the algorithm as the solution of the following maximization problem:


\begin{equation}
\label{eq:argmax_CD}
\begin{aligned}
	\max_{\bm{\theta}} & \quad Q_i(\bm{\theta}|\bm{\theta}^{(t)}) \\
	\textrm{subject to} & \quad 
	\begin{array}{lll}
	\bm{\Sigma} \succeq \bm{0} \\
	\tau_i > 0, \; \; \forall i
	\end{array}
\end{aligned}
\end{equation}

\begin{lemma}
	\label{prop:1}
	The ML estimates $\widehat{\bm{\Sigma}}$ and $\widehat{\tau}_i$ of problem (\ref{eq:argmax_CD}) are given by the following closed-form expressions:
	
	\begin{align}
	\widehat{\tau}_i &= \frac{\textnormal{tr}\big\{\bm{B}_i^{(t)}\blue{\bar{\bm{\Sigma}}^{-1}_i}\big\}}{p} \quad \text{for} \quad i \in [1,n] \\
	\widehat{\bm{\Sigma}} &= \frac{p}{n} \sum_{i=1}^{n}\frac{\bm{C}_i^{(t)^{\top}}}{\textnormal{tr}\big\{\bm{C}_i^{(t)}\widehat{\bm{\Sigma}}^{-1}\big\}} \overset{\Delta}{=} \mathcal{H}(\widehat{\bm{\Sigma}})
	\end{align}
	\blue{with $\bar{\bm{\Sigma}}_i = \bm{P}_i \widehat{\bm{\Sigma}} \bm{P}_i^\top$},  $\bm{C}_i^{(t)} = \bm{P}_i^{\top}\bm{B}_i^{(t)}\bm{P}_i$ and where $\mathcal{H}(\cdot)$ is the fixed point equation.
\end{lemma}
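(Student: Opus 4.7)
My plan is to attack problem~(\ref{eq:argmax_CD}) by a profile-likelihood argument: first maximize the aggregate $Q$-function over $\{\tau_i\}$ with $\bm{\Sigma}$ held fixed, then substitute back and maximize over $\bm{\Sigma}$. The preliminary algebraic step is to remove the permutations $\bm{P}_i$ from inside the objective. Since $\bm{P}_i$ is orthogonal, $|\widetilde{\bm{\Sigma}}_i|=|\bm{\Sigma}|$, and by cyclic invariance of the trace,
\begin{equation*}
\textnormal{tr}\{\bm{B}_i^{(t)}\widetilde{\bm{\Sigma}}_i^{-1}\} = \textnormal{tr}\{\bm{P}_i^\top \bm{B}_i^{(t)}\bm{P}_i\,\bm{\Sigma}^{-1}\} = \textnormal{tr}\{\bm{C}_i^{(t)}\bm{\Sigma}^{-1}\} = \textnormal{tr}\{\bm{B}_i^{(t)}\bar{\bm{\Sigma}}_i^{-1}\}.
\end{equation*}
Up to an additive constant, the $Q$-function therefore takes the concrete form $-n\log|\bm{\Sigma}|-p\sum_i\log\tau_i-\sum_i\tau_i^{-1}\textnormal{tr}\{\bm{C}_i^{(t)}\bm{\Sigma}^{-1}\}$, which already produces the second displayed equality of the $\widehat{\tau}_i$ expression once one recognizes $\textnormal{tr}\{\bm{C}_i^{(t)}\bm{\Sigma}^{-1}\}=\textnormal{tr}\{\bm{B}_i^{(t)}\bar{\bm{\Sigma}}_i^{-1}\}$.

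For $\bm{\Sigma}$ fixed, the map $\tau_i\mapsto -p\log\tau_i - \tau_i^{-1}\textnormal{tr}\{\bm{C}_i^{(t)}\bm{\Sigma}^{-1}\}$ is strictly concave on $\mathbb{R}_{>0}$, so annihilating its derivative immediately gives the unique maximizer $\widehat{\tau}_i=\textnormal{tr}\{\bm{C}_i^{(t)}\bm{\Sigma}^{-1}\}/p$. Positivity of $\widehat{\tau}_i$ holds automatically because $\bm{C}_i^{(t)}$ is positive semidefinite: its blocks are outer products of observed entries and conditional second moments of the missing entries, so $\bm{B}_i^{(t)}\succeq\bm{0}$ and the congruence by $\bm{P}_i^\top$ preserves this.

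Substituting $\widehat{\tau}_i(\bm{\Sigma})$ back yields a profile objective in $\bm{\Sigma}$ alone, $-n\log|\bm{\Sigma}|-p\sum_i\log\textnormal{tr}\{\bm{C}_i^{(t)}\bm{\Sigma}^{-1}\}$ plus constants. I would then differentiate using the standard matrix-calculus identities $\partial\log|\bm{\Sigma}|/\partial\bm{\Sigma}=\bm{\Sigma}^{-1}$ and $\partial\,\textnormal{tr}\{\bm{A}\bm{\Sigma}^{-1}\}/\partial\bm{\Sigma}=-\bm{\Sigma}^{-1}\bm{A}\bm{\Sigma}^{-1}$, set the gradient to zero, and pre- and post-multiply by $\bm{\Sigma}$ to isolate it on the left-hand side, which yields exactly $\widehat{\bm{\Sigma}}=\mathcal{H}(\widehat{\bm{\Sigma}})$. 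Symmetry of $\bm{B}_i^{(t)}$ (because $\bm{F}_i^{(t)}=\bm{E}_i^{(t)\top}$) passes to $\bm{C}_i^{(t)}$, so the transpose in $\bm{C}_i^{(t)\top}$ appearing in the statement is cosmetic and matches the symmetrized form.

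\textbf{Main obstacle.} The routine calculation above only identifies a critical point, and since the fixed-point equation defining $\widehat{\bm{\Sigma}}$ is implicit, verifying that it actually is the global maximum of $Q$ over $\mathcal{S}^p_{++}$ is the delicate part. The standard way out is to appeal to Tyler-type M-estimator arguments: after the determinant normalization of Section~\ref{sec3}, the profile log-likelihood is geodesically concave on the PD cone, which guarantees uniqueness of the stationary point and that the open constraint $\bm{\Sigma}\succeq\bm{0}$ is inactive. For the EM update itself, however, only the first-order conditions are needed, so the proof essentially boils down to the permutation-cycling identity together with the two partial-derivative calculations sketched above.
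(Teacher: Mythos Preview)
Your proposal is correct and follows essentially the same profile-likelihood route as the paper's own proof: optimize over $\tau_i$ first, substitute back, then differentiate in $\bm{\Sigma}$ and rearrange to obtain the fixed-point equation. Your additional remarks on concavity in $\tau_i$, positive semidefiniteness of $\bm{C}_i^{(t)}$, and the global-optimality caveat for $\widehat{\bm{\Sigma}}$ go a bit beyond what the paper spells out, but the underlying argument is the same.
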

\begin{proof}
	
	\blue{Let us start by writing the expectation of the loglikelihood $\mathcal{L}_c(\widetilde{\bm{Y}}|\bm{\theta})$ given by (\ref{eq:ll_CD_permuted}):}
	
	\begin{equation}
	\label{eq:ll_to_differentiate}
	\blue{\mathbb{E}_{\bm{y}_i^m|\bm{y}_i^o,\bm{\theta}^{(t)}}\big[\mathcal{L}_{\text{c}}(\widetilde{\bm{Y}}|\bm{\theta})\big] \propto \mathbb{E}_{\bm{y}_i^m|\bm{y}_i^o,\bm{\theta}^{(t)}}\bigg[- n\log|\bm{\Sigma}| - p\sum_{i=1}^{n}\log \tau_i-\sum_{i=1}^{n}\widetilde{\bm{y}}_{i}^{\top}\big(\tau_i\widetilde{\bm{\Sigma}}_i\big)^{-1}
	\widetilde{\bm{y}}_{i}\bigg]}
	\end{equation}
	\blue{Notwithstanding equation~(\ref{eq:ll_to_differentiate}) is similar to (\ref{eq:Qi}), it differs as it is the expectation conditioned on all observations contained in $\widetilde{\bm{Y}}$, whereas the latter stands for only one observation. By using the result (\ref{eq:tr}) obtained from (\ref{eq:Qi}), it follows that:}
	
	\begin{equation}
		\label{eq:ll_to_differentiate_final}
		\blue{\mathbb{E}_{\bm{y}_i^m|\bm{y}_i^o,\bm{\theta}^{(t)}}\big[\mathcal{L}_{\text{c}}(\widetilde{\bm{Y}}|\bm{\theta})\big] \propto - n\log|\bm{\Sigma}| - p\sum_{i=1}^{n}\log \tau_i-\sum_{i=1}^{n}\tau_i^{-1}\text{tr}\big\{\bm{B}_i^{(t)}\widetilde{\bm{\Sigma}}_i^{-1}\big\}}
	\end{equation}
	\blue{Again, the block matrix $\bm{B}_i^{(t)}$ contains the expectation of the sufficient statistics $\bm{y}^m_i\bm{y}_i^{m\top}$ to be computed at the E-step using (\ref{eq:sufficient_stats}). Now, let differentiate (\ref{eq:ll_to_differentiate_final}) with respect to $\tau_i$ and resolve the equality:}
	\[
	\frac{\partial \blue{\mathbb{E}_{\bm{y}_i^m|\bm{y}_i^o,\bm{\theta}^{(t)}}\big[\mathcal{L}_{\text{c}}(\widetilde{\bm{Y}}|\bm{\theta})\big]}}{\partial \tau_i} = \blue{\bm{0}}
	\]
	This calculus is trivial and finally gives 
	\[\widehat{\tau}_i = \frac{\textnormal{tr}\big\{\bm{B}_i^{(t)}\bar{\bm{\Sigma}}^{-1}_i\big\}}{p}
	\]
	\blue{with $\bar{\bm{\Sigma}}_i= \bm{P}_i\widehat{\bm{\Sigma}}\bm{P}_i^\top$,} which is the desired expression for $\widehat{\tau}_i$. \blue{To find $\widehat{\bm{\Sigma}}$, we first replace} $\tau_i$ by $\widehat{\tau}_i$ in \blue{(\ref{eq:ll_to_differentiate_final})} and obtain the following loglikelihood:
	
	\begin{equation}
	\label{eq:LL2}
	\mathbb{E}_{\bm{y}_i^m|\bm{y}_i^o,\bm{\theta}^{(t)}}\big[\mathcal{L}_{\text{c}}(\widetilde{\bm{Y}}|\bm{\theta})\big] \propto -n\log|\bm{\Sigma}| - p\sum_{i=1}^{n}\log\frac{\textnormal{tr}\big\{\bm{B}_i^{(t)}\widetilde{\bm{\Sigma}}_i^{-1}\big\}}{p} - np 
	\end{equation}
	Then, we derive~(\ref{eq:LL2}) w.r.t. $\bm{\Sigma}$, which leads to the resolution of the following equation:
	
	\[\frac{\partial \blue{\mathbb{E}_{\bm{y}_i^m|\bm{y}_i^o,\bm{\theta}^{(t)}}\big[\mathcal{L}_{\text{c}}(\widetilde{\bm{Y}}|\bm{\theta})\big]}}{\partial \bm{\Sigma}} = \bm{0}
	\]
	\blue{By using the definition $\widetilde{\bm{\Sigma}}_i = \bm{P}_i\bm{\Sigma}\bm{P}_i^\top$, one obtains first:}
	\[
	-n\bm{\Sigma}^{-1} - p\sum_{i=1}^{n}\frac{\bm{\Sigma}^{-1}\bm{C}_i^{(t)^{\top}}\bm{\Sigma}^{-1}}{\textnormal{tr}\big\{\bm{C}_i^{(t)}\bm{\Sigma}^{-1}\big\}} = \bm{0}
	\]
	where $\bm{C}_i^{(t)} = \bm{P}_i^{\top}\bm{B}_i^{(t)}\bm{P}_i$. Arranging terms, then multiplying right and left terms by $\bm{\Sigma}$ gives the desired closed-form expression \blue{of $\widehat{\bm{\Sigma}}$ and concludes the proof}.
\end{proof}

\textbf{Remark 2.}
For $n > p$, which is our case, this estimator can be computed using the fixed point algorithm $\bm{\Sigma}_{m+1}=\mathcal{H}(\bm{\Sigma}_{m})$ where $m$ refers to the iteration index \blue{of the fixed point.}

\textbf{Remark 3.}
In the Gaussian case, finding $\bm{\Sigma}$ that maximizes this expression requires to derive $Q$ with respect to $\bm{\Sigma}$ and then to solve $\frac{\partial Q}{\partial \bm{\Sigma}}=\frac{\partial \mathbb{E}_{\bm{y}_i^m|\bm{y}_i^o,\bm{\Sigma}^{(t)}}\big[\mathcal{L}_{\text{c}}(\widetilde{\bm{Y}}|\bm{\Sigma})\big]}{\partial \bm{\Sigma}}=0$ \blue{where $ \mathbb{E}_{\bm{y}_i^m|\bm{y}_i^o,\bm{\Sigma}^{(t)}}\big[\mathcal{L}_{\text{c}}(\widetilde{\bm{Y}}|\bm{\Sigma})\big] \propto - n\log|\bm{\Sigma}| -\sum_{i=1}^{n}\text{tr}\big\{\bm{B}_i^{(t)}\widetilde{\bm{\Sigma}}_i^{-1}\big\}$}. Based on the above and after some basic derivation calculus, the solution to be computed at each iteration of the EM algorithm is given by:

\begin{equation}
\label{eq:gaussian_sigma}
\widehat{\bm{\Sigma}} = \frac{1}{n}\bm{C}^{(t)^{\top}}
\end{equation}
with $\bm{C}^{(t)} = \sum_{i=1}^n \bm{P}_i^{\top}\bm{B}_i^{(t)}\bm{P}_i$. One can notice that in the case of no missing values, this result leads to the classical Sample Covariance Matrix (SCM) with $\bm{P}_i = \bm{I}_p$ and $\blue{\bm{B}_i^{(t)}} = \mathbb{E}[\bm{y}_i\bm{y}_i^{\top}]$.

The complete estimation procedure of $\bm{\theta}$ is given in Algorithm~\ref{alg1:em_tyler}. The stopping condition of the EM algorithm is ensured by the evaluation of the quantity $||\bm{\theta}^{(t+1)} - \bm{\theta}^{(t)}||^2_F$ at each iteration, while the convergence of the fixed point algorithm relies on $||\bm{\Sigma}_{m+1}^{(t)} - \bm{\Sigma}_{m}^{(t)}||^2_F$. 

At the step $t=0$, the estimate $\bm{\Sigma}^{(0)}$ is initialized with Tyler's estimator from available observations in their full dimension $p$, denoted $\widehat{\bm{\Sigma}}_{\text{Tyl-obs}}$. Unlike $\bm{\Sigma}$, incomplete observations make the direct estimation of $\bm{\tau}$ with the fixed point impracticable: thus, as an initialization, all $\tau_i^{(0)}$ are set to one.

\begin{algorithm}
\caption{EM-Tyl: Estimation of $\bm{\theta}$ under \blue{MSG} distribution with missing values.}
\begin{algorithmic}[1]
	\Require $\{\widetilde{\bm{y}_i}\}_{i=1}^n \sim \mathcal{N}(\bm{0}, \tau_i\bm{\Sigma}), \{\bm{P}_i\}_{i=1}^n$
	\Ensure $\widehat{\bm{\Sigma}},\{\widehat{\tau}_i\}_{i=1}^n$
	\State Initialization: \begin{align*}
	\bm{\Sigma}^{(0)} &= \widehat{\bm{\Sigma}}_{\text{Tyl-obs}} \\
	\bm{\tau}^{(0)} &=\bm{1}_N^{\top}
	\end{align*}
	
	\Repeat 
	\Comment{\textcolor{blue}{EM loop, $t$ varies}}
	\State \noindent Compute 
	\begin{align*}
	\bm{E}_i^{(t)} &=  \widetilde{\bm{y}}_i^o\widetilde{\bm{\Sigma}}_{i,mo}^{(t)}\widetilde{\bm{\Sigma}}_{i,oo}^{-1(t)}\widetilde{\bm{y}}_i^o \\
	\bm{G}_i^{(t)} &= \tau_i^{(t)}\big(\widetilde{\bm{\Sigma}}_{i,mm}^{(t)} - \widetilde{\bm{\Sigma}}_{i,mo}^{(t)}\widetilde{\bm{\Sigma}}_{i,oo}^{-1(t)}\widetilde{\bm{\Sigma}}_{i,om}^{(t)}\big)
	+ \widetilde{\bm{\Sigma}}_{i,mo}^{(t)}\widetilde{\bm{\Sigma}}_{i,oo}^{-1(t)}\widetilde{\bm{y}}_i^o\big(\widetilde{\bm{\Sigma}}_{i,mo}^{(t)}\widetilde{\bm{\Sigma}}_{i,oo}^{-1(t)}\widetilde{\bm{y}}_i^o\big)^{\top}
	\end{align*}
	\State Compute 	$\bm{B}_i^{(t)} = \begin{pmatrix}
		\bm{y}_{i}^o\bm{y}_i^{o\top} & \bm{E}_i^{(t)} \\
		\bm{E}_i^{(t)\top}  & \bm{G}_i^{(t)}
	\end{pmatrix}$ 
	\State Compute $\bm{C}_i^{(t)} = \bm{P}_i^{\top}\bm{B}_i^{(t)}\bm{P}_i$
	\Repeat \Comment{\textcolor{blue}{fixed point, $m$ varies (optional loop)}}
	\State $\widehat{\bm{\Sigma}}^{(t)}_{m+1} = \mathcal{H}(\widehat{\bm{\Sigma}}^{(t)}_m)$ 
	\Until $||\bm{\Sigma}^{(t)}_{m+1} - \bm{\Sigma}^{(t)}_m||^2_F$ converges
	\State Compute $\widehat{\tau}_i^{(t)}, \quad i=1,\dots,n$
	\State $t \leftarrow t+1$
	\Until $||\bm{\theta}^{(t+1)} - \bm{\theta}^{(t)}||^2_F$ converges
	
\end{algorithmic}
\label{alg1:em_tyler}
\end{algorithm}

%

\subsection{Robust low-rank estimation}

Let us now consider the case of data whose covariance matrix lives in a subspace of dimension $r<p$. As we recall, this configuration is useful in many real signal applications and is closely related to principal component analysis (PCA) and signal subspace estimation \cite{tipping1999}.

Following model~(\ref{eq:lowrank}), the parameters to estimate are now given by $\bm{\theta}=[\bm{\zeta}^{\top},\sigma^2,\{\tau_i\}_{i=1}^n]^{\top}$, where $\bm{\zeta}$ contains the elements of the lower triangular matrix of $\bm{H}$. The maximization problem~(\ref{eq:argmax_CD}) to find $\bm{\theta}^{(t+1)}$ at the M-step of the EM algorithm becomes the following low-rank estimation problem:



\begin{equation}
\label{eq:LR_maximization}
\begin{aligned}
\max_{\bm{\theta}} & \quad Q_i(\bm{\theta}|\bm{\theta}^{(t)}) \\
\textrm{subject to} & \quad 
\begin{array}{lll}
\bm{\Sigma} = \sigma^2\bm{I}_p + \bm{H} \\
\text{rank}(\bm{\Sigma}) = r \\
\sigma > 0, \; \; \tau_i > 0, \; \; \forall i
\end{array}
\end{aligned}
\end{equation}
A general solution to this problem was found in the seminal work of \cite{tipping1999}, which is recalled subsequently using the formulations of \cite{kang2014} and \cite{sun2016}. \blue{Firstly, let $\bm{\Sigma}^{(t)} \overset{\text{EVD}}{=}
\sum_{i=1}^{p}\lambda_i^{(t)}\bm{u}_i^{(t)}\bm{u}_i^{\top(t)}$ be the eigenvalue decomposition of $\bm{\Sigma}^{(t)}$ at the $t$-th iteration of the EM algorithm, where $\lambda_1^{(t)}<\dots<\lambda_p^{(t)}$ are the eigenvalues of $\bm{\Sigma}^{(t)}$ and $\bm{u}_1^{(t)}, \dots, \bm{u}_p^{(t)}$ the corresponding eigenvectors. Then, the solution of (\ref{eq:LR_maximization}) is given by:}

\begin{equation}
\label{eq:low_rank_sigma}
\widehat{\bm{\Sigma}} = \hat{\sigma}^2\bm{I}_p + \sum_{i=1}^{r}\hat{\lambda}_i\bm{u}_i^{(t)}\bm{u}_i^{\top(t)}
\end{equation}
where

\begin{equation}
\blue{\hat{\sigma}^2} = \frac{1}{p-r}\sum_{i=r+1}^{p}\lambda_i^{(t)}
\label{eq:low_rank_solution1}
\end{equation}
\begin{equation}
\hat{\lambda}_i = \lambda_i^{(t)} - \blue{\hat{\sigma}^2}, \quad \;  i=1,\dots,r
\label{eq:low_rank_solution2}
\end{equation}
The procedure to estimate $\bm{\theta}$ under the LR assumption is given in Algorithm~\ref{alg2:em_tyler_lr}. It uses the same form than Algorithm~\ref{alg1:em_tyler} where equations~(\ref{eq:low_rank_sigma})--(\ref{eq:low_rank_solution2}) are applied just after the fixed point algorithm to each newly estimated $\bm{\Sigma}$. As stated in \cite{sun2016}, the proof of \blue{monotonicity} of the low-rank estimation algorithm is ensured by standards convergence results of the majorization-minimization (MM) algorithm \cite{razaviyayn2013}. Finally, the estimation of the scales $\bm{\tau}$ does not change from the former full-rank algorithm. 

\begin{algorithm}
	\caption{EM-Tyl-r: low-rank estimation of $\bm{\theta}$ under \blue{MSG} distribution with missing values.}
	\begin{algorithmic}[1]
		\Require $\{\widetilde{\bm{y}_i}\}_{i=1}^n \sim \mathcal{N}(\bm{0}, \tau_i\bm{\Sigma}), \{\bm{P}_i\}_{i=1}^n, \text{rank} \; r < p$
		\Ensure $\widehat{\bm{\Sigma}},\{\widehat{\tau}_i\}_{i=1}^n $
		\State Initialize $\bm{\Sigma}^{(0)}, 
		\bm{\tau}^{(0)} $ as in Algorithm~\ref{alg1:em_tyler}.
		
		\Repeat 
		\Comment{\textcolor{blue}{EM loop, $t$ varies}}
		\State \noindent Compute $\bm{E}_i^{(t)}$, $\bm{G}_i^{(t)}$, $\bm{B}_i^{(t)}$ and $\bm{C}_i^{(t)}$ as in Algorithm~\ref{alg1:em_tyler}.
		\Repeat \Comment{\textcolor{blue}{fixed point, $m$ varies (optional loop)}}
		\State $\widehat{\bm{\Sigma}}^{(t)}_{m+1} = \mathcal{H}(\widehat{\bm{\Sigma}}^{(t)}_m)$ 
		\State $\widehat{\bm{\Sigma}}^{(t)}_{m+1} \overset{\text{EVD}}{=} \sum_{i=1}^{p}\lambda_i^{(t)}\bm{u}_i^{(t)}\bm{u}_i^{\top(t)}$
		\State Update $\widehat{\bm{\Sigma}}^{(t)}_{m+1}$ by computing (\ref{eq:low_rank_sigma}), (\ref{eq:low_rank_solution1}), (\ref{eq:low_rank_solution2})
		\State $\widehat{\bm{\Sigma}}^{(t)}_{m+1} = \widehat{\bm{\Sigma}}^{(t)}_{m+1} / \text{tr}\big(\widehat{\bm{\Sigma}}^{(t)}_{m+1}\big)$
		\Until $||\bm{\Sigma}_{m+1}^{(t)} - \bm{\Sigma}_{m}^{(t)}||^2_F$ converges
		\State Compute $\widehat{\tau}_i^{(t)}, \quad i=1,\dots,n$
		\State $t \leftarrow t+1$
		\Until $||\bm{\theta}^{(t+1)} - \bm{\theta}^{(t)}||^2_F$ converges
		
	\end{algorithmic}
	\label{alg2:em_tyler_lr}
\end{algorithm}

%
%

\subsection{Implementation details}

\blue{In both Algorithm~\ref{alg1:em_tyler} and Algorithm~\ref{alg2:em_tyler_lr}, the set of permutations matrices $\{\bm{P}_i\}_{i=1}^n$ is needed as an input to compute the transformed covariance matrices $\widetilde{\bm{\Sigma}}_i = \bm{P}_i \bm{\Sigma} \bm{P}_i^\top$. Indeed, if a vector $\bm{y}_i$ is fully observed, one finds that $\bm{P}_i = \bm{I}_p$, which makes the computation of $\widetilde{\bm{\Sigma}}$ pointless. To avoid these extra computations, one can decompose the loglikelihood~(\ref{eq:ll_CD_permuted}) in the following way:}

\begin{equation}
	\label{eq:ll_obs_mis}	
	\blue{\mathcal{L}_{\text{c}}(\widetilde{\bm{Y}}|\bm{\theta}) \propto - n\log|\bm{\Sigma}| - p\sum_{i=1}^{n}\log \tau_i-\sum_{i\in N_o}\bm{y}_{i}^{\top}\big(\tau_i\bm{\Sigma}_i\big)^{-1} \bm{y}_{i} -\sum_{i\in N_m}\widetilde{\bm{y}}_{i}^{\top}\big(\tau_i\widetilde{\bm{\Sigma}}_i\big)^{-1} \widetilde{\bm{y}}_{i}}
\end{equation}
\blue{where $N_o = \{i_l\}_{l=1}^{n_o}$ and $N_m = \{i_l\}_{l=1}^{n_m}$ are the sets of indices corresponding to fully observed $\{\bm{y}_i\}$ and partially observed $\{\widetilde{\bm{y}}_i\}$, respectively, with $n = n_o + n_m$. Thenceforth, only the set of permutations matrices $\{ \bm{P}_{i_l} \}_{l=1}^{n_m}$ need to be computed. An important consequence is that if $\bm{y}_i$ is fully observed, $\bm{B}_i^{(t)} = \bm{y}_i^o\bm{y}_i^{o\top} = \bm{y}_i\bm{y}_i^\top$. $\bm{C}_i^{(t)}$ becomes:}

\begin{equation}
	\blue{\bm{C}_i^{(t)} =}
	\begin{cases}
	\blue{\bm{I}_p^{\top}\bm{B}_i^{(t)}\bm{I}_p = \bm{y}_i\bm{y}_i^\top \;, \quad i\in N_o}\\
	\blue{\bm{P}_i^{\top}\bm{B}_i^{(t)}\bm{P}_i \;, \quad i\in N_m}
	\end{cases}
\end{equation}
\blue{Empirical experiments have shown that such care in the computation of $\bm{C}_i^{(t)}$ depending on $i\in N_o$ or $i\in N_m$ accelerates the running time of the EM algorithm.} 

Furthermore, the fixed point loop is optional as it can be seen as an inner EM where $\{\tau_i\}$ is the set of latent variables (hence a single update still increases the likelihood). Our empirical experiments evidenced that both versions of the algorithm achieve similar performance, while performing only a single fixed point iteration tends to achieve a faster convergence. 

\section{Numerical simulations}
\label{sec4}

This section illustrates the validation of the proposed algorithms with numerical experiments on simulated data drawn from the multivariate Gaussian and \blue{mixture of} scaled Gaussian distributions. The performances of the proposed covariance estimation procedure are evaluated in regard to three aspects corresponding to different experiments:

\begin{itemize}
	\item[1)] The missing data ratio and patterns in subsection~\ref{subsec:num1};
	\item[2)] The quantity of outliers corrupting the data in subsection~\ref{subsec:num2};
	\item[3)] The possibility to perform CM-based data imputation in the presence of outliers and missing data in subsection~\ref{subsec:num3}.
\end{itemize}

For the sake of the experiment, the CM and scales parameters are known. The CM $\bm{R}$ is chosen to be Toeplitz\footnote{Note that this information on the structure is not used in the estimation procedure.}, which has the form: 
\begin{equation}
	(\bm{R})_{ij} = \rho^{|i-j|}
\end{equation}
Parameter $\rho$, which controls the correlation structure of the CM, is set to $0.7$. Scales $\{\tau_i\}_{i=1}^n$ are drawn from a Gamma distribution with shape parameter $\alpha$ and scale parameter $\beta=\frac{1}{\alpha}$. In all experiments, we fix $\alpha=1$. To generate a covariance matrix admitting the structure (\ref{eq:lowrank}), we compute:

\begin{equation}
	\bm{\Sigma} = \bm{I}_p + \sigma^2\bm{U}\bm{U}^{\top}
\end{equation}
where $\bm{U} \in \text{St}_{p,r}$ is the underlying signal subspace basis obtained from the eigenvalue decomposition of $\bm{R}$ and $\sigma$ is a free parameter corresponding to the signal to noise ratio. 

In our experiments, the data dimension is $p=15$ and $n=\{63,109,190,331,575,1000\}$. Sets $\{ \bm{y}_i \}_{i=1}^n$ are drawn from the \blue{MSG} distribution with covariance $\bm{\Sigma}$. As the aim is to estimate the structured covariance matrix $\bm{\Sigma}$, the estimation is performed on 500 sets $\{ \bm{y}_i \}_{i=1}^n$ simulated for each value of $n$.
Indeed, for the sake of the experiment, missing data are also simulated, which allows a full control on their ratio and pattern. Importantly, as $n$ increases from 63 to 1000, the missing data ratio decreases as $\{44,22,11,5,2,1\}$\%.

The following covariance estimators, which are reported in Table~\ref{tab:estimators}, are considered for comparison:

\begin{itemize}
	\item[\textit{i)}] The covariance matrix for \blue{MSG} distributions obtained from Algorithm~\ref{alg1:em_tyler} (full-rank) and from Algorithm~\ref{alg2:em_tyler_lr} (low-rank), named $\widehat{\bm{\Sigma}}_{\text{EM-Tyl}}$ and $\widehat{\bm{\Sigma}}_{\text{EM-Tyl-r}}$, respectively. 
	\item[\textit{ii)}] The covariance matrix for the Gaussian distribution obtained from Algorithm~\ref{alg1:em_tyler} (full-rank) and from Algorithm~\ref{alg2:em_tyler_lr} (low-rank) with the solution given by (\ref{eq:gaussian_sigma}), named $\widehat{\bm{\Sigma}}_{\text{EM-SCM}}$ and $\widehat{\bm{\Sigma}}_{\text{EM-SCM-r}}$, respectively.
	\item[\textit{iii)}] The \textit{sample covariance matrix} (SCM) from the clairvoyant data (without missing data):
	\begin{equation}
		\widehat{\bm{\Sigma}}_{\text{SCM-clair}}=\frac{1}{n}\sum_{i=1}^{n}\bm{y}_i\bm{y}_i^{\top}
	\end{equation}
	\blue{and its low-rank version, named $\widehat{\bm{\Sigma}}_{\text{SCM-clair-r}}$}.
	\item[\textit{iv)}] The SCM $\widehat{\bm{\Sigma}}_{\text{SCM-obs}}$ estimated from \blue{the set of fully} observed vectors \blue{$\{\bm{y}_{i_l}\}_{l=1}^{n_o} \in N_o $ where $N_o = \{i_l\}_{l=1}^{n_o}$.}
	\item[\textit{v)}] The \textit{fixed point} estimator or Tyler's estimator from the clairvoyant data:
	\begin{equation}
	\widehat{\bm{\Sigma}}_{\text{Tyl-clair}} = \frac{p}{n}\sum_{i=1}^{n} \frac{\bm{y}_i\bm{y}_i^{\top}}{\bm{y}_i\bm{\Sigma}^{-1}\bm{y}_i^{\top}}
	\end{equation}
	\blue{and its low-rank version, named $\widehat{\bm{\Sigma}}_{\text{Tyl-clair-r}}$}.
	\item[\textit{vi)}] Tyler's estimator $\widehat{\bm{\Sigma}}_{\text{Tyl-obs}}$ from \blue{the set of fully} observed vectors \blue{$\{\bm{y}_{i_l}\}_{l=1}^{n_o} \in N_o $ with $N_o = \{i_l\}_{l=1}^{n_o}$.}
	\item[\textit{vii)}] Multiple imputation (MI) \cite{royston2004, vanbuuren2018} from which a robust version is proposed (RMI): \blue{for each $\bm{y}_i$ with missing values $\bm{y}_i^m$}, $q$ \blue{vectors with imputed missing entries (referred to as \textit{imputed vectors} hereafter) $(\widehat{\bm{y}}_{i1},\widehat{\bm{y}}_{i2},\dots,\widehat{\bm{y}}_{iq})$} are generated \blue{with missing values drawn} from a \blue{MSG} distribution: 
	\begin{equation}
		\widehat{\bm{y}}_{ij}^{m} \sim \mathcal{N}(\bm{\mu}_{ij}^o,\sqrt{\tau_{ij}}\bm{\sigma}_{ij}^{o}), \quad j=1,\dots,q
	\end{equation}
	where $\bm{\mu}_{ij}^{o}$ and $\bm{\sigma}_{ij}^{o}$ are the mean and variance of the observed components of $\bm{y}_{ij}$, and $\tau_{ij}$ are the scales parameters drawn from the Gamma distribution with shape parameter $\alpha=1$. The estimated covariance is the mean of the $q$ Tyler's estimators from the $q$ imputed vectors:
	\begin{equation}
		\widehat{\bm{\Sigma}}_{\text{RMI}} = \frac{1}{q}\sum_{j=1}^{q}\widehat{\bm{\Sigma}}_{\text{Tyl},j} =\frac{p}{nq}\sum_{j=1}^{q}\sum_{i=1}^{n} \frac{\widehat{\bm{y}}_{ij}\widehat{\bm{y}}_{ij}^\top}{\widehat{\bm{y}}_{ij}\widehat{\bm{\Sigma}}^{-1}_{\text{Tyl},j}\widehat{\bm{y}}_{ij}^\top}
		\label{eq:rmi}
	\end{equation}
	\blue{The low-rank version is also estimated, named $\widehat{\bm{\Sigma}}_{\text{RMI-r}}$.}
	\item[\textit{viii)}] Robust stochastic imputation (RSI): this procedure is \blue{tantamount to} MI, but with $q=1$.
	\item[\textit{viiii)}] Mean imputation \cite{vanbuuren2018}: missing components of each vector $\bm{y}_i$ are imputed by the mean of the observed components:
	\begin{equation}
		\widehat{\bm{y}}_i^{m} = \bm{\mu}_i^o
	\end{equation} Then, the covariance $\widehat{\bm{\Sigma}}_{\text{Mean-Tyl}}$ is estimated using Tyler's estimator. \blue{The low-rank version is also estimated, named $\widehat{\bm{\Sigma}}_{\text{Mean-Tyl-r}}$.}
\end{itemize}

\begin{table*}
\centering
\ra{1.15}
\blue{
\begin{tabular}{@{}cllc@{}}\toprule
	\textbf{Type} & \textbf{Estimator} & \textbf{Description} & \textbf{Low-rank} \\
	\toprule
	\multirow{2}{*}{\rotatebox[origin=c]{90}{\textsc{em}}} & EM-Tyl & Robust covariance estimation (this study) & yes \\
	& EM-SCM & Covariance estimation (this study) & yes \\
	\hline
	\multirow{4}{*}{\rotatebox[origin=c]{90}{\textsc{no missing data}}} & Tyl-clair & Tyler's estimator on full data \cite{tyler1987} & yes \\
	& Tyl-obs & Tyler's estimator on fully observed data \cite{tyler1987} & no \\
	& SCM-clair & Sample covariance matrix on full data & yes \\
	& SCM-obs & Sample covariance matrix on fully observed data & no \\
	\hline
	\multirow{3}{*}{\rotatebox[origin=c]{90}{\textsc{imputation}}}& RMI & Tyler's estimator on multiple imputed data \cite{royston2004, vanbuuren2018} & yes \\
	& RSI & Tyler's estimator on stochastic imputed data & yes \\
	&Mean-Tyl & Tyler's estimator on data with mean imputation \cite{vanbuuren2018}& yes \\
\bottomrule
\end{tabular}
\caption{List of estimators used in numerical experiments. In the experiments, low-rank estimators keep their names with an additional ``-r'', \emph{e.g.}, EM-SCM-r.}
}
\label{tab:estimators}
\end{table*}

\subsection{Covariance estimation with missing data}
\label{subsec:num1}

This experiment shows the performances of covariance estimation as functions of the missing data ratio and missing data patterns. To compare the estimated covariance matrix to the true data covariance, the following geodesic distance is used \cite{bhatia2009positive}:

\begin{equation}
	\label{eq:distance}
	\delta_{\mathcal{S}^p_{++}}^2(\bm{\Sigma},\widehat{\bm{\Sigma}}) = ||\log(\bm{\Sigma}^{-\frac{1}{2}}\widehat{\bm{\Sigma}}\bm{\Sigma}^{-\frac{1}{2}})||_2^2
\end{equation}

This distance, which is sufficient to measure estimation errors, emanates from the Fisher metric for the Gaussian distribution on $\mathcal{S}_{++}^p$ \cite{bouchard2020}.

Three missing data patterns are examinated, which correspond to the following configurations (see Fig.~\ref{fig1}):
\begin{itemize}
	\item One block of missing data of size $(7\times 20)$. This is a special case of the monotone missing data pattern as studied in \cite{liu1999, liupalomar2019}. This configuration corresponds to the case where one group of variables is missing at the same observations, \emph{e.g.}, a group of sensors that are equally interrupted in time. This case is referred to as \textit{monotone pattern} in the experiments.
	\item Multiple blocks of missing data with various size and randomly scattered into the data set. This configuration corresponds to the general missing data pattern. It is most likely to happen in real-life applications (see, \emph{e.g.}, \cite{musial2011, shen2015}). This case is referred to as \textit{general pattern}.
	\item Randomly distributed missing data. This configuration is also a general missing data pattern except that multiple values for one variable at one observation are missing across the dataset. This case is referred to as \textit{random pattern}.
\end{itemize}

\begin{figure}
	\centering
	\includegraphics[scale=.9, trim={1.5cm 8cm 3cm 8cm}, clip]{./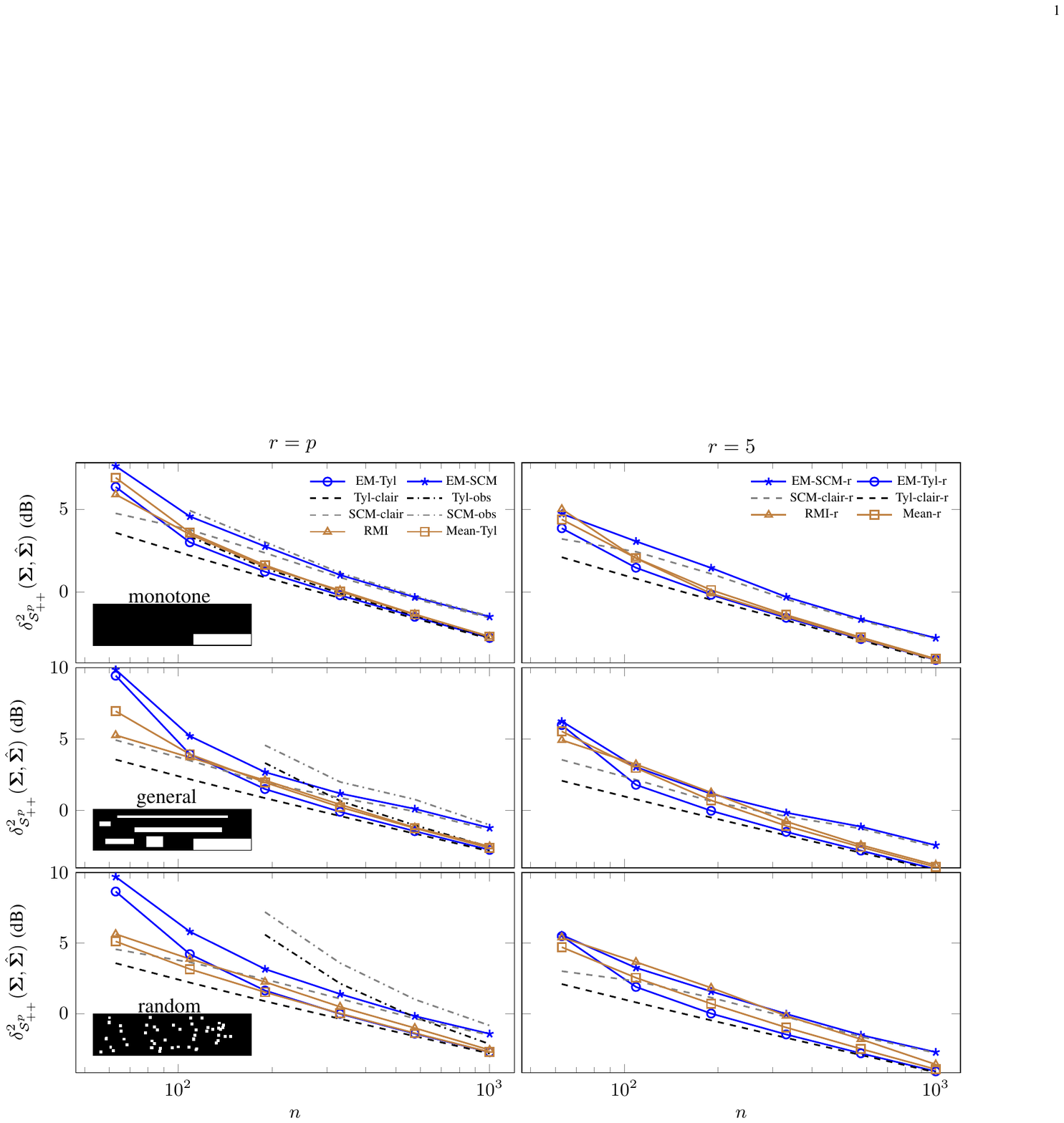}
	\caption{Mean of error measures~\ref{eq:distance} for $r=p$ (left) and $r=5$ (right) of methods EM-Tyl, EM-Tyl-r, EM-SCM, EM-SCM-r, Tyl-clair, Tyl-obs, SCM-clair, SCM-obs, RMI and mean imputation as functions of the number of samples $n$. The mean are computed over 500 simulated sets $\{\bm{y}_i\}_{i=1}^n$ for monotone (top), general (middle) and random (bottom) missing data pattern with $p=15$.}
	\label{fig3}
\end{figure}

\begin{figure}
	\centering
	\includegraphics[scale=.7]{./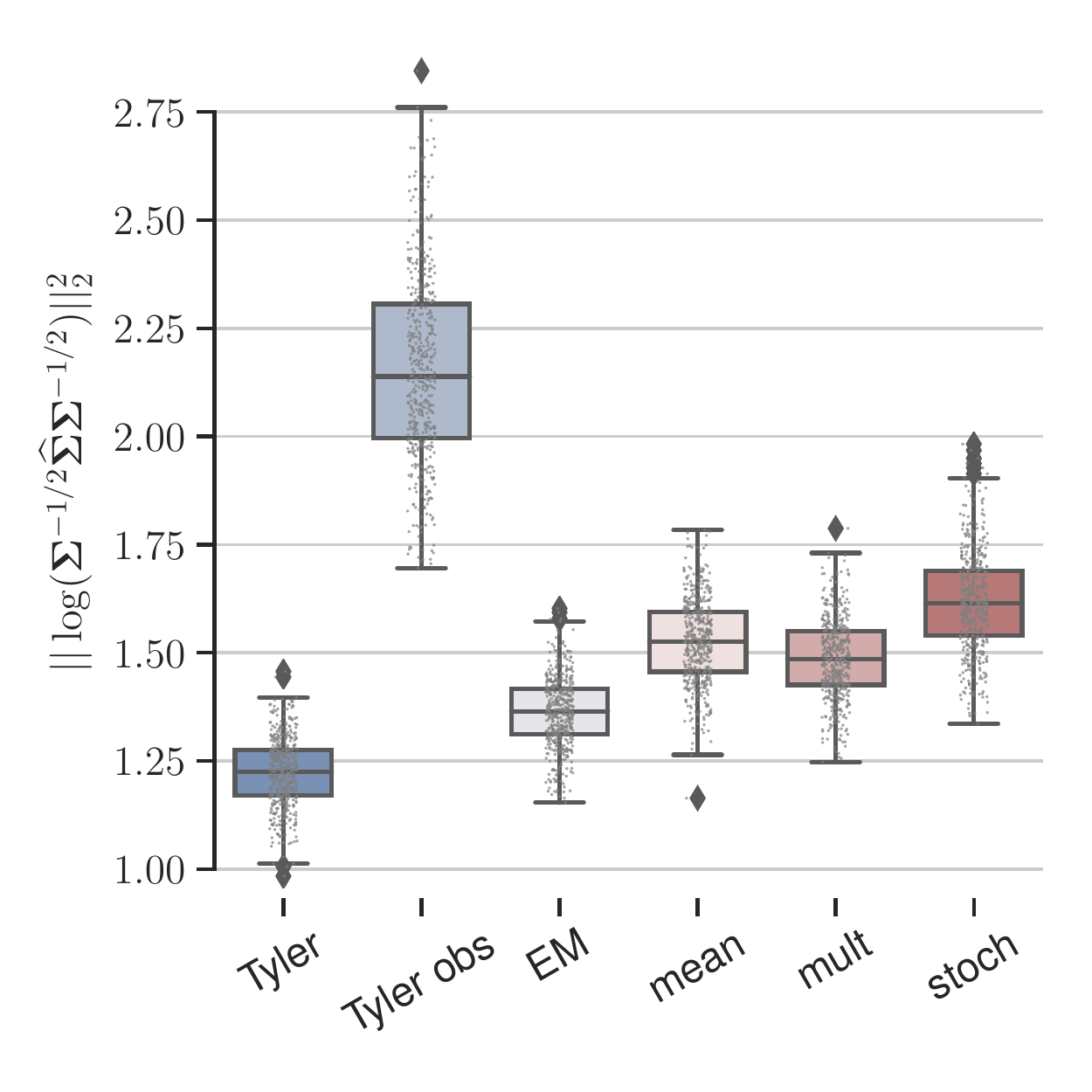}
	\caption{Boxplots showing geodesic distances of Tyler, Tyler observed, EM (ours), mean imputation, multiple imputation and stochastic imputation methods for $r=p$ and $n=200$. Small dots show the computed distances over 500 simulated sets $\{\bm{y}_i\}_{i=1}^n$ corrupted by 20\% of missing data with general pattern, which corresponds to the middle-left plot of Fig.~\ref{fig3}.}	
	\label{fig4}
\end{figure}

Fig.~\ref{fig3} and \ref{fig4} show the results in terms of mean distance for two rank settings (the full rank case $r=p$ and the low-rank case with $r=5$) and the three aforementioned pattern configurations. In the full rank case, EM-Tyl shows better performances than all other estimators, \blue{but fails when the number of available observations is small, especially when it gets close to $p$.} 

For the random pattern configuration, this result \blue{is even more notable and} when $n$ increases, EM-Tyl exhibits similar errors than mean imputation. Good performances of the latter can be explained by the increasing availability of observations as $n$ increases, which gives a better estimate of $\bm{\mu}^o_i$. Unsurprisingly, all estimators (except the ones based on the SCM) reach Tyler's estimator \blue{on clairvoyant data} as the missing data ratio decreases in large sample size. In the case of Gaussian estimators, the EM-SCM estimator shows very close performances than the clairvoyant SCM. As expected, Tyl-obs and SCM-obs display poor performances when the missing data ratio increases. 

Results for the low-rank model are illustrated for $r=5$. \blue{In all cases, we observe that EM-Tyl-r is a better estimator than EM-SCM-r in terms of distance. For small $n$, the gap between the EM-Tyl-r estimator and the imputation strategies (RMI-r and Mean-Tyl-r) diminishes compared to the full-rank case, and EM-Tyl-r significantly gives better estimates when the number of samples increases.}

\blue{In conclusion, depending on the available sample size and the missing data pattern at hand, different estimators are preferable: imputation strategies are more advantageous for small amount of observed samples whereas our estimator performs better as the number of observed samples grows and when the covariance matrix has a low-rank structure.}

\subsection{Should outliers be discarded?}
\label{subsec:num2}
Data corrupted by outliers is one of the main motivation of robust estimation \cite{zoubir2012}. In this case, one can be interested to know whether Tyler's estimator or EM-Tyl estimator gives higher performances. In the complete data case, the former can be directly estimated with outliers. To estimate the latter, outliers can be discarded or masked (\emph{i.e.}, set as missing data) if their position are known, which is a common approach in various applications. In the following, a data set $\{\bm{y}_i\}_{i=1}^n$ is drawn from the \blue{MSG} distribution with a fixed number of samples $n=200$ \blue{and with $p=15$}. Outliers are generated by adding white Gaussian noise (WGN) $\bm{z}_i \sim \mathcal{N}(\bm{0}, \sigma_{wgn}\blue{\bm{I}_p})$ with an increasing variance $\sigma_{wgn}$ on randomly selected observations $\bm{y}_i$.

Three data sets are considered to compare the CM estimation errors:
\begin{itemize}
	\item[1)] Corrupted data $\bm{y}_i + \bm{z}_i$;
	\item[2)] Masked data where the outliers are set as missing data.
	\item[3)] Data without outliers $\bm{y}_i$;
\end{itemize}

Tyler's estimator is computed on data set 1), whereas the EM-Tyl algorithm is applied to data set 2). Finally, the clairvoyant SCM and Tyler's estimators are computed from data set 3).

Fig.~\ref{fig5} gives an overlook on what choice of the CM estimation procedure would be preferable. It shows that it mainly depends on the outlier variance: for a small $\sigma_{wgn}$, Tyler's estimator should be preferred, whereas EM-Tyl is more suitable when the outlier variance reaches the variance of the signal $\sigma$. Note also that EM-Tyl provides better performances than the clairvoyant SCM until the missing data ratio reaches $\sim$45\% of the data.
\begin{figure}
	\centering
	\includegraphics[scale=.62, trim={.1cm .1cm 1.4cm 1.4cm}, clip]{./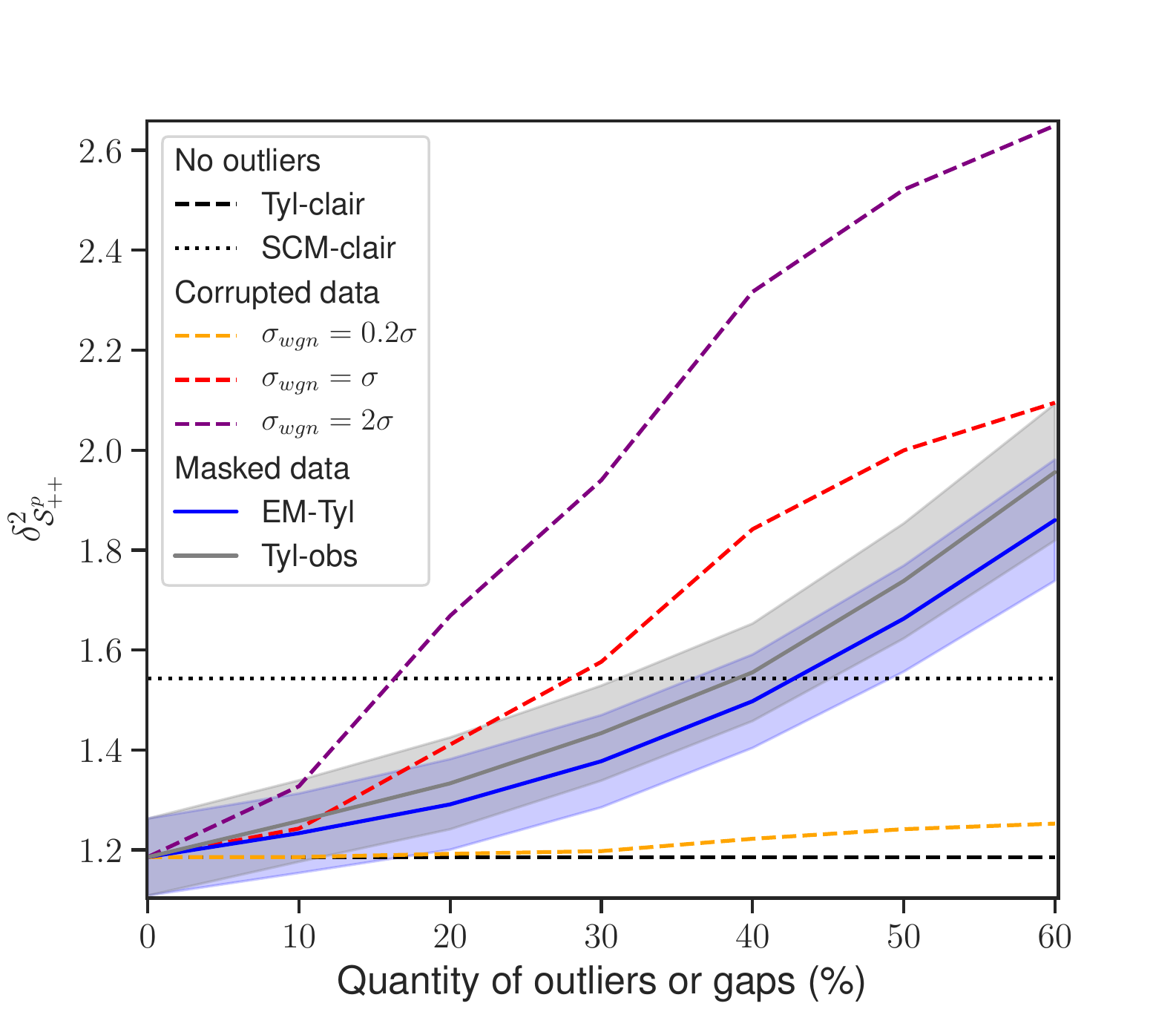}
	\caption{Mean geodesic distances over 200 simulated sets $\{\bm{y}_i\}_{i=1}^n$ as function of the missing data or outlier ratio (\%) for $n=200$. Tyl-clair and SCM-clair errors are shown with no outliers. Tyl-clair is displayed on corrupted data with different outlier variance $\sigma_{wgn}$ w.r.t. the data variance $\sigma$. EM-Tyl and Tyl-obs errors are shown when outliers are masked as missing values.}
	\label{fig5}
\end{figure}

\subsection{Robust imputation of missing values with outliers}
\label{subsec:num3}

In this experiment, we propose to apply our estimators to missing data imputation with possible outliers corrupting the data. Missing data imputation \cite{vanbuuren2018} concerns a wide range of applications, including remote sensing \cite{shen2015}. 
A recent procedure to deal with missing values was developed, namely the EM-EOF method \cite{hf2020}, which uses the EM algorithm and empirical orthogonal functions (EOFs) to iteratively decompose the CM and reconstruct the incomplete data with a few number of selected EOFs $k\ll p$. The final number of components is the one giving the minimal error between the initial data and the imputed data. As shown in Fig.~\ref{fig6}, instead of the SCM (which is used in the aforementioned study), the EM-Tyl-r estimator is used with $r=k$ at the last iteration of the EM-EOF algorithm. This is justified by the fact that this algorithm iteratively updates the SCM with the predicted missing values at each step, whereas the EM-Tyl-r estimator only needs the missing data pattern, which remains the same at each iteration. The RMI estimator~(\ref{eq:rmi}) is also considered for comparison.

The data is generated using a Haystack-type model \cite{lerman2012robust} which draws samples $\{\bm{y}_i\}_{i=1}^n$ as inliers $\bm{y}_i^{\text{in}}$ and outliers $\bm{y}_i^{\text{out}}$:
\begin{align}
\label{eq:haystack}
\{\bm{y}_i\}_{i=1}^n &= \{ \{\bm{y}_i^{\text{in}} \}_{i\in\{1, \dots, n_{\text{in}}\}}, \{ \bm{y}_i^{\text{out}}\}_{i\in\{1, \dots, n_{\text{out}}\}} \} \\
\bm{y}_i^{\text{in}} &\sim \mathcal{N}(\bm{0},\bm{I}_p + \sigma_s^2\bm{U}\bm{U}^{\top}) \\
\bm{y}_i^{\text{out}} &\sim \mathcal{N}(\bm{0},\bm{I}_p + \sigma_o^2\bm{U}_{\bot}\bm{U}_\bot^{\top})
\end{align}
where $\bm{U} \subset \text{St}_{p,k}$ is the signal subspace basis, $\sigma_s^2$ and $\sigma_o^2$ are respectively the signal to noise ratio (SNR) and the outlier to noise ratio (ONR), and $100\times\frac{(n-n_{\text{in}})}{n}$ is the outlier ratio in the data set in percentage. 

In the experiments, we choose $\sigma_s^2 = 10$, $\sigma_o^2\in\{0,1.5,3,\dots,30\}$ and the outlier ratio varies from 0 to 50\% of the data. Note that inliers and outliers are whole vectors (not just entries) which are randomly chosen in the data set. 30\% of the data is discarded as missing values under a general pattern.

\begin{figure}
	\centering
	\includegraphics[scale=0.8, trim={3.7cm .1cm 2.7cm .1cm}, clip]{./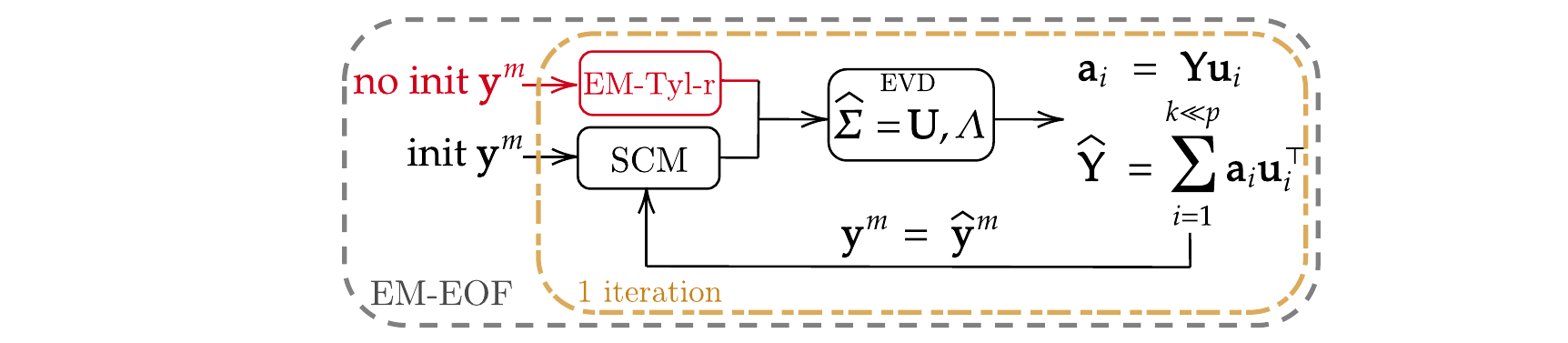}
	\caption{Diagram of the EM-EOF method \cite{hf2020} for missing data imputation. After initializing the missing values, the CM matrix is estimated from the data matrix $\bm{Y}$, which is then reconstructed with a number of components $k\ll p$. Instead of using the SCM, the EM-Tyl-r estimator with $r=k$ is plugged in at the last iteration of the algorithm (in red).}
	\label{fig6}
\end{figure}

To measure the performance of imputation, \blue{a set $\mathcal{Y} = \{y_l\}_{l=1}^s$ consisting of $s$ random points over $\{y_{ij}\}_{1\le i \le n, 1\le j\le p}$ is selected. $\mathcal{Y}$ is called the cross-validation (CV) data set.} Then, we compute the root-mean-square error (RMSE) between $\mathcal{Y}$ and its reconstruction $\widehat{\mathcal{Y}} = \{\hat{y}_l\}_{l=1}^s$ after imputation:

\begin{equation} 
	\label{eq:rmse}
	\delta_{imputation} = \bigg[\frac{1}{s}\sum_{l=1}^{s}  (\hat{y}_l  - y_l)^2\bigg]^{1/2}
\end{equation}
\blue
In total, 1\% of the data is chosen for the cross-validation data set. These values are artificially removed and copied before the process, and compared to the new values after the imputation procedure. For model~(\ref{eq:haystack}), CV errors illustrated by Fig.~\ref{fig7} show a substantial gain by replacing the SCM by the EM-Tyl-r estimator at the last step of the EM-EOF method, whereas RMI performs poorly. The gain is larger for important signal to outlier ratio $\sigma_{o}^2$, which confirms the robustness of the proposed estimator for robust low-rank imputation under model~(\ref{eq:haystack}).

\begin{figure}
	\centering
	\begin{tikzpicture}
	\node[inner sep=0pt] at (0,0)
	{\includegraphics[trim={1.4cm .4cm .5cm 8.7cm}, clip, scale=.7]{./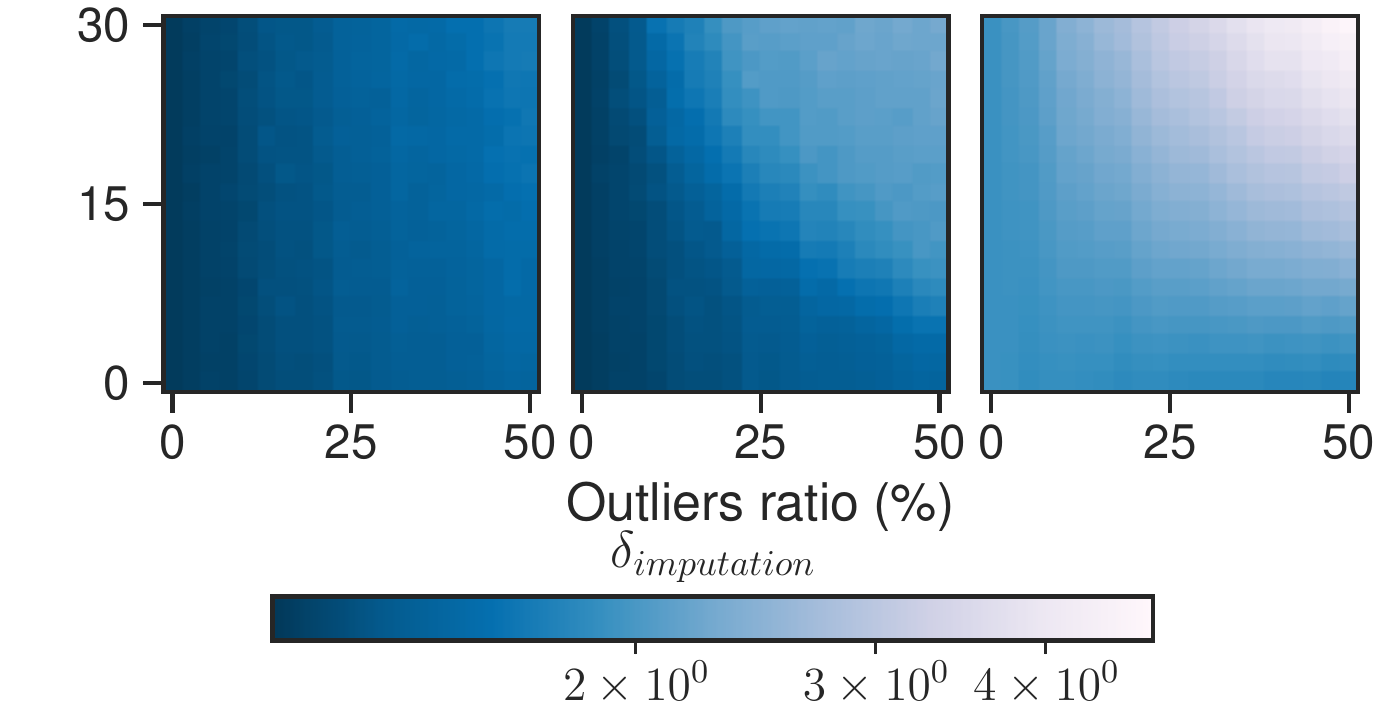}};
	\node[rotate=90] at (-4.8,1)  {$\sigma_{o}^2$};
	\draw[gray, dashed] (-3.2,2.45) rectangle (1,3.2);
	\node[gray] at (-1,3)  {\small{EM-EOF}};
	\node at (-2.4,2.65)  {EM-Tyl-r};	
	\node at (0.4,2.65)  {SCM};
	\node at (3.4,2.65)  {Robust MI};
	\end{tikzpicture}
	\caption{Mean of cross-validation error~(\ref{eq:rmse}) over 200 simulated sets $\{\bm{y}_i\}_{i=1}^n$ as function of the outlier ratio (\%) and signal to outlier ratio $\sigma_{o}^2$ for the EM-EOF using EM-Tyl-r estimator, EM-EOF using the SCM and RMI procedures. 30 \% of missing data under the general pattern are generated. Chosen parameters are $n=200$ and $p=15$.}
	\label{fig7}
\end{figure}

\section{Application to classification and clustering problems}
\label{sec5}

\subsection{Classification with missing values}

In remote sensing, and more particularly in multispectral and hyperspectral imaging, missing data can arise for various reasons including:
\begin{itemize}
	\item[1)] Clouds when the sensor operates in the visible part of the spectrum \cite{shen2015};
	\item[2)] Sensor failure, as the known problem on the scan-line corrector images of the Landsat ETM+ sensor \cite{chen2011} or Aqua MODIS band 6 \cite{rakwatin2008}; 
\end{itemize}

In these cases, the data (or the entire band itself) is masked, downsampled to avoid temporal gaps or restored using specific gap-filling methods \cite{chen2011,rakwatin2008}. When it comes to classification tasks, incomplete data is a challenge. In this scope, existing approaches classify, mask, and interpolate values from cloudy observations in a pre-processing step \cite{breizhcrops2020}.

To tackle this challenge, the proposed estimators are used as a set of descriptors\footnote{In machine learning problems, statistical descriptors are classically used as they are often more discriminative than raw data.} $\{\theta_i\}$ to a classification problem on the \textit{BreizhCrops} data set \cite{breizhcrops2020}. The data consists of Sentinel-2 multivariate time series of field crop reflectances on the Brittany region over 23 spectral bands. It is divided into four sub-regions called FRH01, FRH02, FRH03 and FRH04 corresponding to the four departments of Brittany. These sub-regions contain field labels which are gathered in 9 selected classes representing crop categories, \emph{e.g.}, \textit{barley}, \textit{nuts}, \textit{wheat}, etc. Following \cite{breizhcrops2020}, each band is mean-aggregated over one field parcel to a feature vector $\bm{y}^t \in \mathbb{R}^p$, where $p$ is the number of features (here the spectral bands) and $t$ a timestamp. Thus, the whole data set is denoted $\{\{\bm{y}_k^t\}_{k=1}^K\}_{t=1}^T$, which corresponds to the aggregation of all reflectances at field parcels $k\in [1,K]$ and timestamps $t\in[1,T]$. The experiment is done on the L1C top-of-atmosphere product and on 13 bands selected by the \textit{BreizhCrops} processing chain \cite{breizhcrops2020}. \blue{For the L1C product, the total number of parcels is $K=608263$ and the number of observations per parcel is $T=51$.}

A supervised covariance-based classification is performed using a Minimum distance to Riemannian mean (MDRM) classifier \cite{barachant2012}, which works in a test-training form. The training step provides a set of $p\times p$ SPD matrices encoding field parcels for the available classes. Assuming that the data is complete, the set of SCMs would be computed as:

\begin{equation}
	\bm{\Sigma}_k = \frac{1}{T}\sum_{t=1}^{T}\bm{y}^t_k(\bm{y}^t_k)^\top, \quad k=1,\dots,K
\end{equation}
For each class, a center of mass of the available parcels is estimated. In the test step, a field parcel is also encoded as an SPD matrix and then assigned to the class whose center of mass is the closest according to the distance (\ref{eq:distance}) acting on the manifold $\mathcal{S}^p_{++}$ \cite{congedo2019}.

Before the experiment, missing data are artificially created in the data set. For this, successive bands are set as missing and the performances are evaluated as function of the missing data ratio. The classifier is trained first on the \blue{FRH01 region and tested on the FRH03 region. A second experiment is conducted, where the train data set is changed to the FRH04 region while the test set remains unchanged.}

Results in terms of overall accuracy (OA) for the FRH03 region \blue{with two different training sets} versus the number of missing bands are displayed in Fig.~\ref{fig:8}. Three covariance estimation strategies are shown: covariance estimation for Gaussian data (EM-SCM), robust stochastic imputation (RSI) and robust covariance estimator (EM-Tyl) multiplied by the \blue{geometric} mean of the estimated scales $\widehat{\tau}_i$:
\begin{equation}
	\text{EM-Tyl-}\Pi\tau_i = \text{EM-Tyl} \times \bigg(\prod_{i=1}^{T}\widehat{\tau}_i\bigg)^{\frac{1}{T}}
\end{equation}
\blue{Here, the geometric mean acts as an additional power information to the EM-Tyl estimator and better fits the average of textures that might vary on different scales.}
Unfortunately, experiments have shown that the low-rank structure does not improve the OA, which might be due to the mean-aggregation over each parcel, which essentially acts as a filter. 

Results show that, for this data set, the MDRM classifier based on robust covariance matrix estimation (EM-Tyl-$\Pi\tau_i$-based) is \blue{generally} more robust to missing bands than the EM-SCM one, whereas the RSI-based classifier gives a rapidly decreasing OA when the gaps ratio increases. \blue{Furthermore, we observe that the EM-SCM-based classifier is more suited for large missing data ratio when the training phase is performed on the FRH04 region}. Interestingly, the classifier based on the EM-Tyl-$\Pi\tau_i$ estimator \blue{provides} an almost stable OA until the gap ratio reaches \blue{$28\%$} of the data \blue{(4 missing bands) when when FRH01 is the training set and $14\%$ of the data (2 missing bands) when the classifier is trained on FRH04}.

\begin{figure}
	\centering
	\includegraphics[trim={4.5cm 11cm 3.5cm 11.1cm}, clip, scale=.78]{./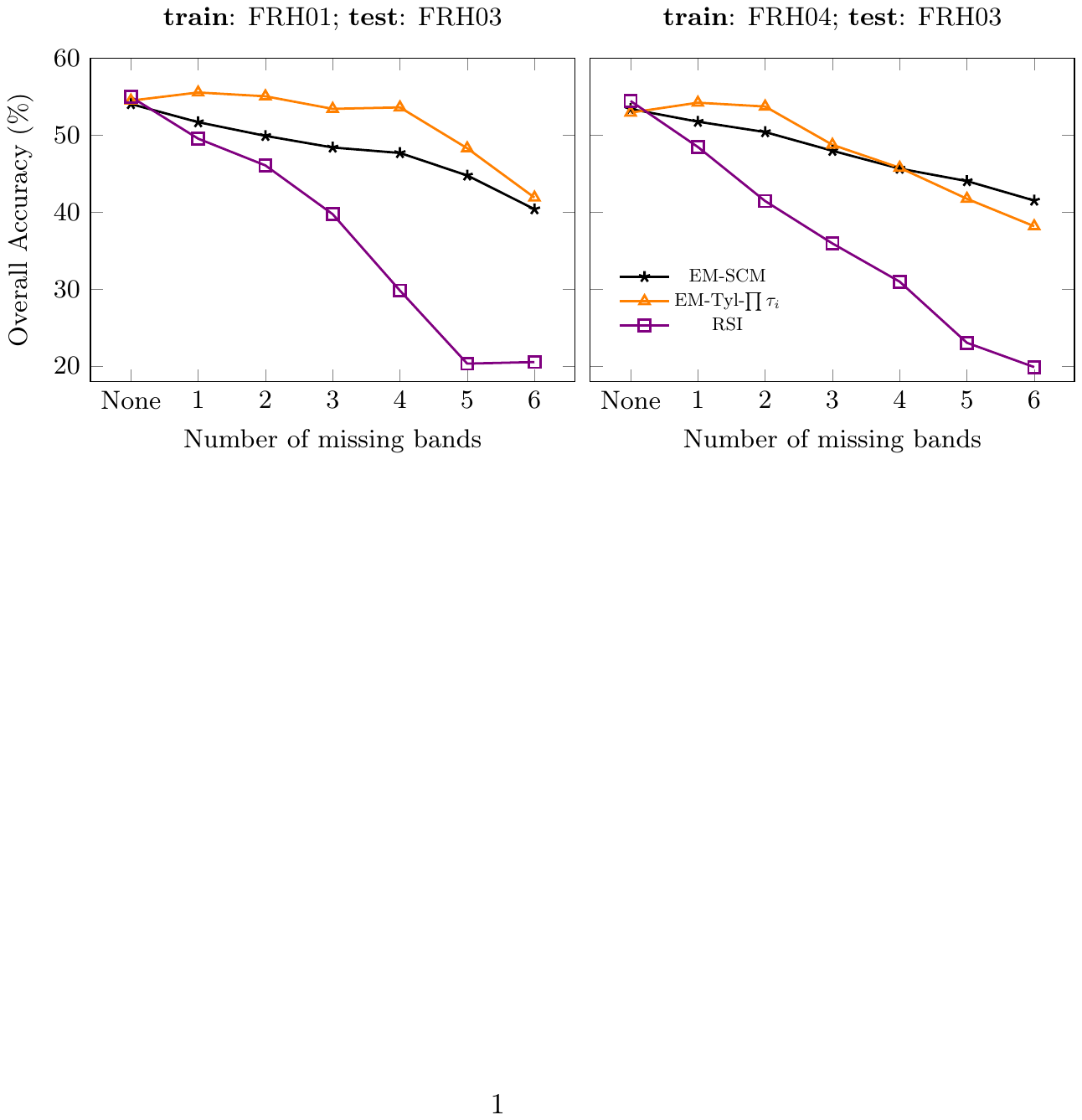}
	\caption{Classification mean overall accuracy (\%) versus the number of randomly missing bands  (1 band $\sim 7\%$ of the data) over 50 run of the MDRM \cite{barachant2012} classifier based on EM-SCM, EM-Tyl-$\prod\tau_i$ and RSI estimators.}
	\label{fig:8}
\end{figure}

\subsection{Image clustering with missing values}

We apply the proposed CM estimators to a hyperspectral image clustering problem using the \textit{K-means++} algorithm \cite{vassilvitskii2006} on the \textit{Indian Pines} data set \cite{baumgardner2015}, consisting of a $145\times 145$ pixels image with $p=200$ bands. As in the classification problem, the proposed estimators EM-SCM and EM-Tyl are used as descriptors $\{\theta_i\}$, as well as EM-SCM-r and EM-Tyl-r. The aim is to partition the descriptors in a number of clusters which correspond to the 16 classes dividing the \textit{Indian Pines} image.

The image is first centered by subtracting the global mean. Then, a sliding window of size $w \times w$ is applied to the image. One descriptor $\{\theta_i\}$ is estimated using the $n = w^2$ observations in each window denoted $\bm{X}_i \in \mathbb{R}^{p\times n}$. Thus, we obtain a set of descriptors $\{\theta_i\}$ to cluster using a \textit{K-means++} \cite{vassilvitskii2006}. For the descriptors using the low-rank model~(\ref{eq:lowrank}), the first $r=5$ components are kept which concur with the five principal eigenvectors of the SCM of \textit{Indian Pines} containing more than 95\% of the total cumulative variance\footnote{This measure can be easily computed by the formula $\sum_{i=1}^{r}\lambda_i/\sum_{i=1}^{p}\lambda_i$.}.

Due to the high dimensionality and a possible runtime overflow due to the creation of missing data, the image is subsampled regularly to reduce the dimension to 20 bands and cropped to get a final $85 \times 70$ image composed by 5 or the 16 original classes (see Fig.~\ref{fig:9a}). As shown by Fig.~\ref{fig:9b}, sensor failure is simulated by adding missing values on random columns of selected bands.

\begin{figure}
	\centering
	\begin{subfigure}{0.49\linewidth}
		\centering
		\includegraphics[scale=0.5, trim={1cm .5cm 1cm .5cm}, clip]{./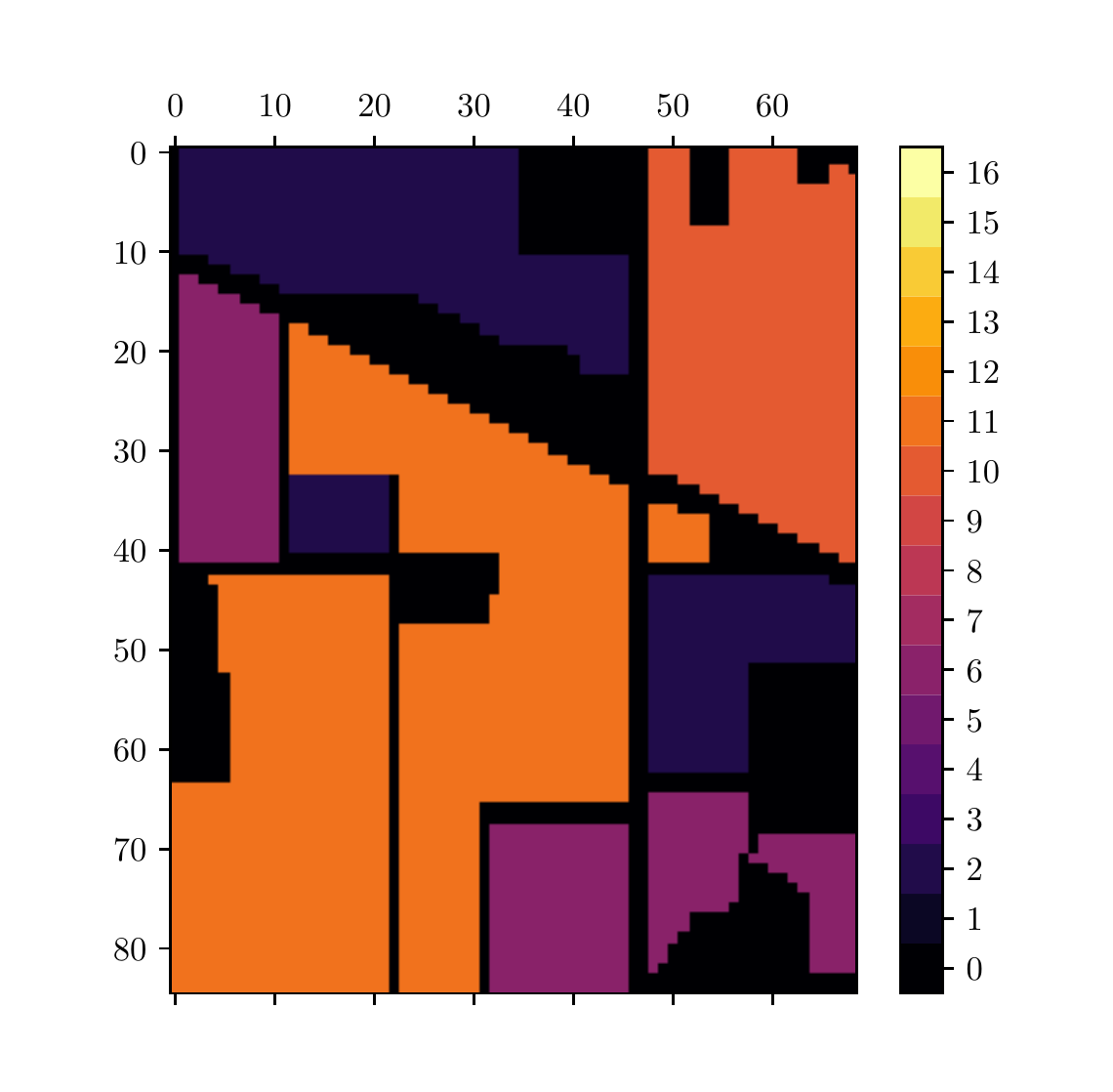}	
		\caption{Ground truth}
		\label{fig:9a}
	\end{subfigure}
	\begin{subfigure}{0.49\linewidth}
		\centering
		\includegraphics[scale=0.5, trim={1cm .5cm .5cm .5cm}, clip]{./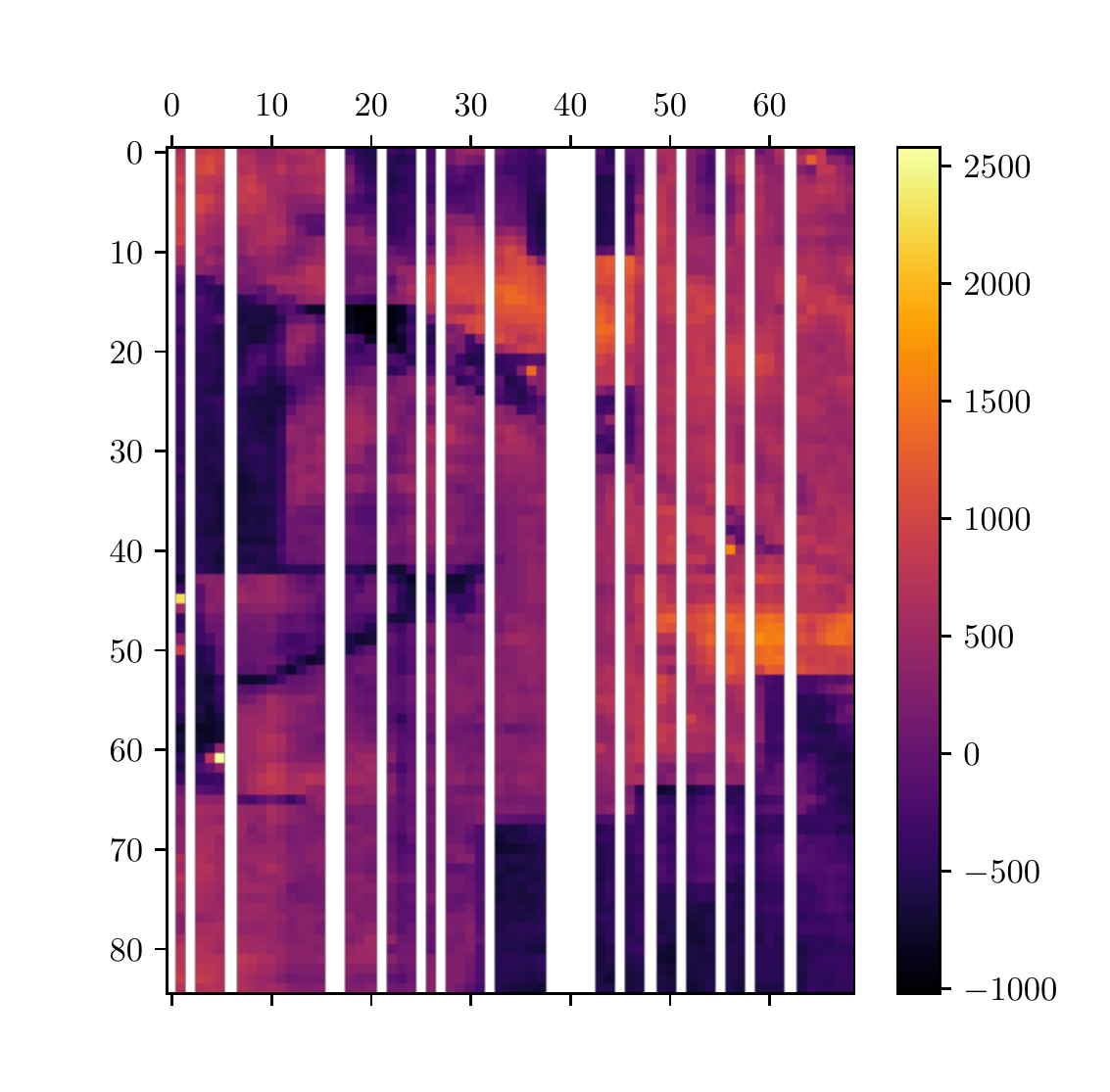}
		\caption{Simulated sensor failure}	
		\label{fig:9b}
	\end{subfigure}
	\caption{Selected sub-image of \textit{Indian Pines}: ground truth (left) and simulated sensor failure on band 10 (right) where white stripes are the missing values (20 columns of pixels are set as missing).}
	\label{fig:9}
\end{figure}

The averaged Overall Accuracy (OA) are reported in Fig.~\ref{fig:10}. We observe that in general, the descriptors based on RSI and RSI-r estimators give lower accuracies compared to the descriptors using the SCM, SCM-r, EM-Tyl and EM-Tyl-r estimators. As the missing data ratio increases, the performances of the EM-SCM estimator are undermined whereas EM-Tyl estimator accuracies remain stable \blue{around 50\%}. For large missing data ratio, EM-SCM-r performs surprisingly well compared to EM-Tyl and EM-Tyl-r. For lower missing data ratios, clustering achieves its best performances using the EM-Tyl-r estimator.

\begin{figure}
	\centering
	\begin{tikzpicture}
	\begin{axis}[
	at={(0,0)},
	width=.8\linewidth,
	height=5.5cm,
	xlabel={\# of incomplete bands},
	xticklabels={None,None,1,2,3,4,5},
	ylabel={Overall Accuracy},
	legend columns=2,
	legend entries={EM-SCM, EM-SCM-r,
		EM-Tyl, EM-Tyl-r,
		RSI, RSI-r
	},
	legend style={nodes={scale=.8},at={(.75,.92)},anchor=north,draw=none}
	]
	\addplot[thick, mark=star, mark size=2pt] table [x=Band, y=SCM] {./OA_indian_pines_low_rank};
	\addplot[thick, mark=star, dashed, mark options={solid}, mark size=2pt] table [x=Band, y=SCM_lr] {./OA_indian_pines_low_rank};
	\addplot[thick, mark=triangle, gray, mark size=2pt] table [x=Band, y=Tyler] {./OA_indian_pines_low_rank};
	\addplot[thick, mark=triangle, dashed, mark options={solid}, gray, mark size=2pt] table [x=Band, y=Tyler_lr] {./OA_indian_pines_low_rank};
	\addplot[thick, mark=square, violet, mark size=2pt] table [x=Band, y=Imput] {./OA_indian_pines_low_rank};
	\addplot[thick, mark=square, dashed, mark options={solid}, violet, mark size=2pt] table [x=Band, y=Imput_lr] {./OA_indian_pines_low_rank};
	\end{axis}
	\end{tikzpicture}
	\caption{Clustering mean overall accuracy (\%) versus the number of incomplete bands (as in Fig.~\ref{fig:9b}) over 10 run of the \textit{Kmeans++} algorithm for the descriptors based on EM-SCM, EM-Tyl, RSI, EM-SCM-r, EM-Tyl-r and RSI-r estimators.}
	\label{fig:10}
\end{figure}
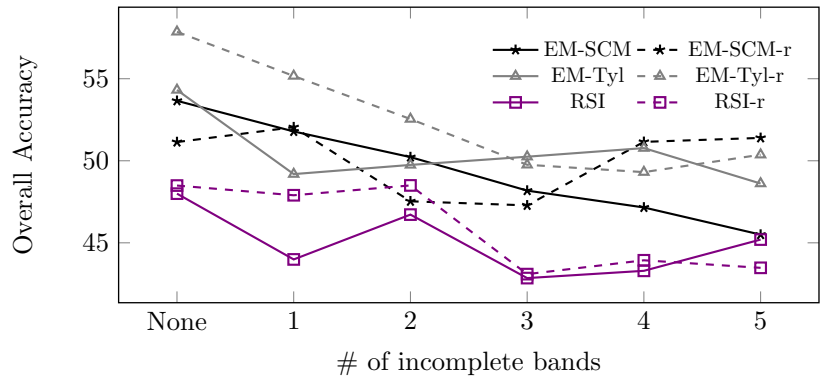

\section{Conclusion}
\label{sec6}
This article proposes a novel procedure based on the EM algorithm to perform robust low-rank estimation of the covariance matrix with missing data following a general pattern. The developed tools take advantage of the \blue{MSG} distribution and of the LR structure of the covariance matrix. Closed-form expressions of the unknown CM and associated scales are derived and integrated to the M-step of the algorithm, which generalizes the Gaussian case. The performance of the proposed estimators are validated on simulated data sets with missing values and corrupted by outliers, as well as real-world incomplete data sets. Compared to the classical Gaussian assumption and to the unstructured model, experiments show the possibility to improve the results in terms of CM estimation and robust imputation, as well as supervised (classification) and unsupervised (image clustering) problems. Some potential extensions of this work include the extension to other classes of \blue{compound Gaussian} distributions and experimenting classification tasks with temporal gaps rather than spectral gaps, as well as more spectral information by reducing the downsampling rate. 


\section*{Acknowledgment}

\small This work was supported by ANR ASTRID project MARGARITA (ANR-17-ASTR-0015). The authors would like to thank Antoine Collas and Ammar Mian for their precious help on the implementation pipeline of the MDRM and $Kmeans++$ algorithms for real data, and Florent Bouchard for his deep insights in robust estimation and covariance-based classification.



\bibliographystyle{elsarticle-num}
\bibliography{biblio}

\end{document}